\documentclass[12pt,reqno]{article}

\usepackage{xcolor}
\usepackage{amssymb}
\usepackage{amsmath}
\usepackage{amsthm}
\usepackage{amsfonts}
\usepackage{graphicx}
\usepackage{float}

\usepackage[colorlinks=true,
linkcolor=webgreen,
filecolor=webbrown,
citecolor=webgreen]{hyperref}

\definecolor{webgreen}{rgb}{0,.5,0}
\definecolor{webbrown}{rgb}{.6,0,0}

\usepackage{color}
\usepackage{fullpage}

\usepackage{graphics}
\usepackage{latexsym}
\usepackage{epsf}

\newcommand{\seqnum}[1]{\href{https://oeis.org/#1}{\rm \underline{#1}}}

\usepackage{pgfplots}
\pgfplotsset{compat=1.18}
\usepackage{enumitem}

\begin{document}


\theoremstyle{plain}
\newtheorem{theorem}{Theorem}
\newtheorem{corollary}[theorem]{Corollary}
\newtheorem{lemma}[theorem]{Lemma}
\newtheorem{proposition}[theorem]{Proposition}

\theoremstyle{definition}
\newtheorem{definition}[theorem]{Definition}
\newtheorem{example}[theorem]{Example}
\newtheorem{conjecture}[theorem]{Conjecture}

\theoremstyle{remark}
\newtheorem{remark}[theorem]{Remark}

\numberwithin{equation}{section}
\numberwithin{theorem}{section}
\numberwithin{figure}{section}

\begin{center}
\vskip 1cm{\LARGE\bf
Re-Ordering the Generalized Hofstadter Functions $G$, $H$, and Beyond\\
}
\vskip 1cm
\large
    Pierre Letouzey\\
    IRIF \\
    Universit\'e Paris Cit\'e, CNRS, Inria\\
    F-75013 Paris\\
    France\\
    \href{mailto: letouzey@irif.fr}{\tt letouzey@irif.fr}\\
\end{center}

\vskip .2 in
\begin{abstract}
    Hofstadter's G function is recursively defined via $G(0)=0$ and then $G(n)=n-G(G(n-1))$. Following Hofstadter, we vary the number $k$ of nested recursive calls in this equation and obtain a family of functions $(F_k)$.
    In a previous article, we proved that this family is ordered
pointwise: for all $k$ and $n$, we have $F_k(n) \le F_{k+1}(n)$.
Here we propose a much shorter proof of this theorem, without
infinite morphic words. The key ingredient for this direct proof
is a new formula expressing iterates of the preimages of $F_k$.
Thanks to this formula, we also prove several conjectures mentioned
in our previous article. In particular we obtain that
$F_k(n) < F_{k+1}(n)$ when $n > \tfrac{1}{2}(k+1)(k+6)$.
  We provide a certified formalization of all these results in the Rocq proof assistant.
\end{abstract}

\section{Introduction}

In a previous article~\cite{LetouzeyShuoSteiner2026}, we provided the first
proof of pointwise monotonicity for a family of generalized Hofstadter
functions. We also expressed several conjectures concerning these
functions, and in particular a conjecture about their strict ordering.
The present article is a follow-up of this last work:
this time we provide a much shorter proof
for the main theorem of our last article as well as first proofs for
our former conjectures. Once again, all these proofs has been
formalized and verified in the Rocq proof assistant, as described
in Section~\ref{Rocq}.

This new article can be read alone, since it contains again all the
the required initial definitions. The basic properties are also
mentioned again briefly, before focusing on the novel content.
Nevertheless, much of the context described in
the former article is still fully relevant but not duplicated here,
in particular the discussion on related
works~\cite[Sect.~1.7]{LetouzeyShuoSteiner2026}.

\subsection{The functions}

As in our previous article~\cite{LetouzeyShuoSteiner2026},
for each integer $k\ge 1$, we consider
the following
recursive function\footnote{In all this work, the
$n$-th exponent of a function denotes its $n$-th iterate.
For instance, $F_2^2(n)$ is $F_2(F_2(n))$, not the square of
$F_2(n)$.}
\[
F_k:\, \mathbb{N} \to \mathbb{N}, \quad n \mapsto \begin{cases}0, & \mbox{if}\ n = 0; \\ n-F_k^k(n-1), & \mbox{otherwise}.\end{cases}
\]
The function $F_k$ is well defined, since one may prove that
$0 \le F_k(n) \le n$ for all $n \ge 0$. This family of functions is
due to Hofstadter \cite[Chap.~5]{GEB}.
In particular, $F_2$ is Hofstadter's
function~$G$ (see OEIS entry~\seqnum{A005206}
\cite{OEIS,GEB,DowneyGriswold84,GaultClint}). Note that
$G(n)=\lfloor (n+1)/\varphi \rfloor$ where $\varphi$ is the
golden ratio $(1+\sqrt{5})/2$. Similarly, $F_3$ is
Hofstadter's function~$H$ (see OEIS \seqnum{A005374} and the recent
study by Shallit~\cite{Shallit2025}).
The generalization to
higher degrees of recursive nesting in the definition of~$F_k$
was already suggested by Hofstadter~\cite{GEB}.
To be complete, the OEIS database already includes
$F_4$ as \seqnum{A005375},
$F_5$ as \seqnum{A005376}, and
$F_6$ as \seqnum{A100721}. On the other hand, we
choose to start this sequence with~$F_1$, where only one recursive call
is done. This $F_1$ function can easily be shown to verify
$F_1(n)=\lfloor (n+1)/2 \rfloor=\lceil n/2 \rceil$.
Note that we could extend the previous definition down to $k=0$ and
obtain a non-recursive function~$F_0$ constantly equal to $1$
except in $0$. But this $F_0$
has little in common with the other~$F_k$ functions, so we exclude
this case $k=0$ most of the time.

\begin{figure}[ht]
\pgfplotsset{width=\linewidth}
\begin{tikzpicture}
  \begin{axis}[
    xmin=0,xmax=30,ymin=0,ymax=25,samples=31,
    xtick = {0,5,...,30},
    ytick = {0,5,...,30},
    legend pos=south east
  ]
 \addplot+[mark=.,color=black,style=dashed,domain=0:30] {x};
 \addlegendentry{id \phantom{=G}}
    \addplot [mark=.,color=blue] coordinates {
(0, 0) (1, 1) (2, 1) (3, 2) (4, 3) (5, 4) (6, 5) (7, 6) (8, 6)
 (9, 7) (10, 7) (11, 8) (12, 9) (13, 9) (14, 10) (15, 11) (16, 12)
 (17, 12) (18, 13) (19, 14) (20, 15) (21, 16) (22, 16) (23, 17)
 (24, 18) (25, 19) (26, 20) (27, 21) (28, 21) (29, 22) (30, 23)};
 \addlegendentry{$F_5 \phantom{=G}$}
    \addplot [mark=.,color=purple] coordinates {
(0, 0) (1, 1) (2, 1) (3, 2) (4, 3) (5, 4) (6, 5) (7, 5) (8, 6)
 (9, 6) (10, 7) (11, 8) (12, 8) (13, 9) (14, 10) (15, 11) (16, 11)
 (17, 12) (18, 13) (19, 14) (20, 15) (21, 15) (22, 16) (23, 17)
 (24, 18) (25, 19) (26, 19) (27, 20) (28, 20) (29, 21) (30, 22)};
 \addlegendentry{$F_4 \phantom{=G}$}
    \addplot [mark=.,color=orange] coordinates {
(0, 0) (1, 1) (2, 1) (3, 2) (4, 3) (5, 4) (6, 4) (7, 5) (8, 5)
 (9, 6) (10, 7) (11, 7) (12, 8) (13, 9) (14, 10) (15, 10) (16, 11)
 (17, 12) (18, 13) (19, 13) (20, 14) (21, 14) (22, 15) (23, 16)
 (24, 17) (25, 17) (26, 18) (27, 18) (28, 19) (29, 20) (30, 20) };
 \addlegendentry{$F_3=H$}
 \addplot+[mark=.,color=red,style=solid,domain=0:30]
 {floor((x+1) * 0.618033988749894903)};
 \addlegendentry{$F_2=G$}
 \addplot+[mark=.,color=black,style=solid,domain=0:30] {ceil(x/2)};
 \addlegendentry{$F_1 \phantom{=G}$};
 \begin{scope}[dashed, every rectangle node/.style={draw,solid,anchor=north},
                       every circle node/.style={fill=black,inner sep=1pt}]
 \draw (7,4) node[circle]{} -- ++ (0,-1.5) node {$N_1=7$};
 \draw (12,8) node[circle]{} -- ++ (0,-3.5) node {$N_2=12$};
 \draw (18,13) node[circle]{} -- ++ (0,-6) node {$N_3=18$};
 \draw (25,19) node[circle]{} -- ++ (0,-9) node {$N_4=25$};
 \end{scope}
 \end{axis}
\end{tikzpicture}
\caption{Plotting $F_1,F_2,\ldots,F_5$.
 Note that $F_k < F_{k+1}$ after $N_k = \tfrac{1}{2}(k+1)(k+6)$.
}
\label{fig:plot}
\end{figure}

\subsection{Summary of our main results}

The main result of our previous article~\cite{LetouzeyShuoSteiner2026}
is the following theorem, stating that the
$(F_k)$ family of functions is ordered pointwise.

\begin{theorem} \label{t:weak}
For all $k \ge 1$, $n \ge 0$, we have $F_k(n) \le F_{k+1}(n)$.
\end{theorem}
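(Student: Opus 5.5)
We plan to prove the theorem by a direct induction on $n$, with $k$ fixed, using preimages of $F_k$ instead of infinite morphic words. First we collect the basic properties of $F_k$, all provable by strong induction on $n$ from the recursive definition. We have $F_k(0)=0$ and $F_k(1)=1$. We have $F_k(n+1)-F_k(n)\in\{0,1\}$, so $F_k$ is nondecreasing and onto. No two consecutive steps are flat: if $F_k(n+1)=F_k(n)$ then $F_k(n+2)=F_k(n+1)+1$. Consequently every $m$ has one or two preimages. We define the right inverse $R_k(m)=\max\{n : F_k(n)=m\}$. The point is that a pointwise inequality $F_k\le F_{k+1}$ between nondecreasing onto step functions is equivalent to the reverse inequality $R_{k+1}(m)\le R_k(m)$ for all $m$, i.e.\ to comparing positions of the last occurrence of each value.

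The key step is a closed formula for the iterates $R_k^j$ that is uniform in $k$. From $n=F_k(n)+F_k^k(n-1)$ one expects a relation of the shape
\[ R_k(m)=m+R_k^{\,k-1}(m-1)+\text{(correction)}, \]
or, more usefully, an additive formula such as
\[ R_k^{\,j}(m)=R_k^{\,j-1}(m)+R_k^{\,j-k}(m)\quad\text{(suitably shifted)}, \]
analogous to the Fibonacci-like recurrence $A_{j}=A_{j-1}+A_{j-k}$ governing the ``$k$-bonacci'' numeration underlying $F_k$. We will guess the precise form by computing small cases (for $k=1,2,3$, where $R_1(m)=2m$ and $R_2(m)=\lfloor (m+1)\varphi\rfloor$ roughly). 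We will then prove it by induction on $m$ using the definition of $F_k$ and the flatness-flag property above.

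With such a formula, we show $R_{k+1}^j(m)\le R_k^j(m)$ for all $j\ge 0$ and $m$ by a simultaneous induction on $m$ (and on $j$). The recurrence for $R_{k+1}$ involves the iterate $j-k-1$, which is smaller than the iterate $j-k$ used for $R_k$. The monotonicity of each $R_k^i$ in $i$ (since $R_k(m)\ge m$) then lets each term be compared termwise. Taking $j=1$ yields $R_{k+1}\le R_k$, hence the theorem.

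We expect the main obstacle to be finding and proving the exact iterate formula, especially handling the $\pm1$ shifts and the boundary values near $0$ so that it holds for every $m$, not just asymptotically. If a clean additive identity fails, we would fall back on comparing counting functions $\#\{n\le N: F_k(n+1)=F_k(n)\}$ via the same recursion.
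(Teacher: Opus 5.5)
Your reformulation is sound. Your $R_k$ is the paper's $L_k$, and by the Galois connection $F_k\le F_{k+1}$ is equivalent to $R_{k+1}\le R_k$. Your guessed recurrence is even correct once made precise. The key formula is $R_k^j(m)=m+\sum_{i=k-j}^{k-1}F_k^i(m)$ for $0\le j\le k$. It gives $R_k^j=R_k^{j-1}+F_k^{k-j}$ for $1\le j\le k$, and from it one gets $R_k^j=R_k^{j-1}+R_k^{j-k}$ for all $j\ge k+1$.

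You neither state nor prove this formula, which is the crux. More seriously, the comparison you build on it fails. For $j\le k$ the second summand is an iterate of $F_k$ (a ``negative'' iterate of $R_k$), and there the order is reversed, since $F_k^i\le F_{k+1}^i$. At $j=1$ your target $R_{k+1}(m)\le R_k(m)$ reads $F_{k+1}^{k}(m)\le F_k^{k-1}(m)$. The termwise chain $F_{k+1}^{k}\le F_k^{k}\le F_k^{k-1}$ breaks at its first link, e.g.\ $F_3^2(5)=3>2=F_2^2(5)$. The case $j=1$ is just the theorem restated, so it cannot come out of your induction on $j$. The recurrence for $j\ge k+1$ only pushes comparisons up to higher iterates, and it needs all $j\in[0,k+1]$, including $j=1$, as base cases. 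This is the obstruction the paper notes when explaining why comparing $I_k$ with $I_{k+1}$ does not work.

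The missing idea is to compare \emph{mismatched} iterates whose exponents equal the nesting depth of each recursion. That means $R_k^k$ against $R_{k+1}^{k+1}$, in the direction $R_k^k\le R_{k+1}^{k+1}$. Their formulas are $R_k^k(m)=m+\sum_{i=0}^{k-1}F_k^i(m)$ and $R_{k+1}^{k+1}(m)=m+\sum_{i=0}^{k}F_{k+1}^i(m)$. Both contain only nonnegative iterates of $F$, so termwise comparison now runs the right way, with the extra term $F_{k+1}^{k}(m)\ge 0$ to spare.

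The proof is then a strong induction on $n\ge 1$, the case $n=0$ being trivial. Set $m=F_k^k(n-1)=n-F_k(n)$, which is $<n$. The induction hypothesis, propagated to iterates, gives $F_k^i(m)\le F_{k+1}^i(m)$ for all $i$. Hence $n-1\le R_k^k(m)\le R_{k+1}^{k+1}(m)$. The Galois connection then gives $F_{k+1}^{k+1}(n-1)\le m$, i.e.\ $F_{k+1}(n)\ge n-m=F_k(n)$. The paper writes the same argument with the least preimages $I_k^k$ and $I_{k+1}^{k+1}$, which is equivalent.
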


In Section~\ref{shortweak} we revisit this theorem, providing
this time a concise and direct
proof. In particular this new proof does not require
anymore the use of morphic words\footnote{See, for instance,
Allouche and Shallit~\cite{allouche_shallit_2003} for an introduction
to morphic words, also called morphic sequences.}
as in our previous work.
Then we refine further this proof method
and manage to prove in Section~\ref{strictorder} the following
theorem, that was only conjectured in our previous
article~\cite[Sect.~7]{LetouzeyShuoSteiner2026}.

\begin{theorem}
\label{t:strict}
For all $k \ge 1$ and $n > \tfrac{1}{2}(k+1)(k+6)$,
we have $F_k(n) < F_{k+1}(n)$.
\end{theorem}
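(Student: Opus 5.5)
We plan to work with preimages rather than with $F_k$ directly, in the spirit of the abstract's ``new formula expressing iterates of the preimages of $F_k$''. Since $F_k$ is nondecreasing, onto, and has increments $0$ or $1$, the inequality $F_k(n) < F_{k+1}(n)$ holds for all $n > N_k$ if and only if, for every value $m$ beyond some threshold, the first position where $F_{k+1}$ reaches $m$ comes strictly before the first position where $F_k$ reaches $m$. Equivalently, we must compare the counting functions of ``flat steps'' of $F_k$ and $F_{k+1}$.

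We first derive, from the recursion $F_k(n)=n-F_k^k(n-1)$, a formula for the largest preimage $R_k(m)=\max\{n : F_k(n)=m\}$. We expect an identity of the form $R_k(m)=m+R_k^{k-1}(m-1)+1$, possibly with boundary shifts. Iterating it, we aim for a closed expression for the iterates $R_k^j(m)$ as sums of lower iterates. Theorem~\ref{t:weak} should then amount to $R_{k+1}^j(m)\le R_k^j(m)$ for all $j\le k$ and all $m$, proved by simultaneous induction on $m$, with $j$ inner. The key step in that induction is the additive formula, combined with the monotonicity of $R_k$ in its argument.

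For strictness we refine this induction. We will show that $R_{k+1}(m) < R_k(m)$ for all $m$ larger than an explicit $M_k$, and that strictness propagates. The plan is as follows.
\begin{itemize}
\item[(i)] Compute the initial segments explicitly. For $n \le k+2$ or so, both functions satisfy $F_k(n)=F_{k+1}(n)=n-1$ (for $n\ge 2$), and we track exactly where each first stalls. The first flat steps of $F_k$ sit at positions $k+2$, then at gaps growing linearly, which is where the quadratic $\tfrac12(k+1)(k+6)$ comes from.
\item[(ii)] Verify by direct computation on this initial range that the last equality $F_k(n)=F_{k+1}(n)$ occurs at $n=N_k=\tfrac12(k+1)(k+6)$.
\item[(iii)] Show that beyond this point the inequality between preimage iterates becomes strict and stays strict. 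The strict version of the additive formula should carry a strict inequality in one summand into the sum, while the other summands keep the weak inequality supplied by the proof of Theorem~\ref{t:weak}.
\end{itemize}

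The main obstacle is step (iii), specifically the base of the strict induction. The formula expresses $R_k(m)$ through $R_k^{k-1}(m-1)$, whereas $R_{k+1}(m)$ involves $R_{k+1}^{k}(m-1)$, which has one more iterate. The comparison therefore mixes different iteration depths. We would first try to prove a stronger invariant, for example $R_{k+1}^{j+1}(m)\le R_k^{j}(m)+c$ for suitable $j$ and a constant $c$, alongside the strict one.

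Strictness also has to be seeded in a window of length about $k$ just after $N_k$, because the recursion looks back $k$ iterations. We would establish that window by the explicit description of the early values from step (i), namely arithmetic-progression-like blocks of flat steps. After that, the induction on $m$ takes over.
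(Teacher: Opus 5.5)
The identity your whole plan rests on, $R_k(m)=m+R_k^{k-1}(m-1)+1$, is false for every $k\ge 2$, and no boundary shift can repair it. Since $R_k^{k-1}(m-1)\ge m-1$, it would give $R_k(m)\ge 2m$. But $R_k=L_k$ satisfies $L_k(m)=m+F_k^{k-1}(m)$, so $2m-L_k(m)=m-F_k^{k-1}(m)\ge m-F_k(m)$, which is unbounded; for instance $R_2(2)=3$, not $5$.

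The usable statement is Lemma~\ref{lemkey}. It writes $I_k^j$ and $L_k^j$, for $j\le k$, as the argument plus a sum of \emph{forward} iterates $F_k^i$. The paper then compares $I_k^k$ with $I_{k+1}^{k+1}$, which are iterates of different depths. The point is that their two sums match term by term, $F_k^i$ against $F_{k+1}^i$ for $0\le i\le k-1$ at an earlier point, leaving one extra nonnegative term $F_{k+1}^k$. Your same-depth comparison $R_{k+1}^j\le R_k^j$ is exactly what produces the ``mixing of depths'' you flag as the main obstacle. The invariant you float to fix it points the wrong way: for $j=k$, Lemma~\ref{lemkey} and Corollary~\ref{c:fsgrows} give $L_{k+1}^{k+1}(m)-L_k^k(m)\ge F_{k+1}^k(m)\to\infty$. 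So $R_{k+1}^{k+1}(m)\le R_k^k(m)+c$ fails for every constant $c$.

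Even with correct formulas, your strictness mechanism is too weak. A strict order of first hitting times does not give a strict order of the functions. For $k=1$ we have $I_2(4)=6<7=I_1(4)$, yet $F_1(7)=F_2(7)=4$. What characterizes $F_k<F_{k+1}$ on the plateau of $F_k$ at height $m$ is the shifted inequality $I_{k+1}(m+1)\le I_k(m)$. So your opening ``iff'' holds in one direction only.

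In the paper, $F_k(n)=n-m$ with $m=F_k^k(n-1)$ and $I_k^k(m)<n\le I_k^k(m+1)$ (Proposition~\ref{p:highlow}). Strictness at $n$ then requires the overlap inequality $I_k^k(m+1)\le I_{k+1}^{k+1}(m)$. Carrying one strict summand through a sum only yields a strict inequality between preimages, which the paper explicitly notes is insufficient.

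Your seeding is also misplaced. The step at $n$ looks back to $m=n-F_k(n)$, which already equals $k+3$ at $n=N_k+1$. So the base must cover the whole triangular zone $[k+3,N_k]$, where $F_k$ and $F_{k+1}$ coincide except where $m-k-2$ is triangular. Strictness on a short window after $N_k$ cannot seed the induction.

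The missing ingredients are two lemmas:
\begin{itemize}
\item Lemma~\ref{bootstrap}, the overlap inequality on $[k+3,N_k]$. It is proved by letting the extra term $F_{k+1}^k$ absorb the constant $-2$ and the at most $q-1$ negative early differences.
\item Lemma~\ref{lem:F2Ioverlap}, which turns strictness at $m>N_k$ into the overlap at $m$ and closes the induction.
\end{itemize}
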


For the rest of this article, we abbreviate $N_k$ for $\tfrac{1}{2}(k+1)(k+6)$.

As a side note, if we include the degenerated function $F_0$ in our
sequence $(F_k)$, these Theorems~\ref{t:weak} and~\ref{t:strict} also hold
trivially for $k=0$. Our formalization in the Rocq prover includes this
extension (see Section~\ref{Rocq}).

In Section~\ref{formerconj2} we finally prove another statement that we
conjectured in our previous article.

\begin{theorem}
\label{t:Fkkp1}
For all $k \ge 1$, $n\ge 0$, we have $F_k^{k+1}(n) \ge F_{k+1}^{k+2}(n)$.
\end{theorem}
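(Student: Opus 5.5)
We plan to deduce Theorem~\ref{t:Fkkp1} from the defining equations of $F_k$ and $F_{k+1}$, combined with the pointwise ordering of Theorem~\ref{t:weak}. The starting point is to rewrite both sides through the recursion. For $n\ge 0$ we have $F_k^k(n) = n+1-F_k(n+1)$, hence
\[
F_k^{k+1}(n) = F_k\bigl(F_k^k(n)\bigr) = F_k\bigl(n+1-F_k(n+1)\bigr),
\]
and similarly $F_{k+1}^{k+2}(n) = F_{k+1}\bigl(n+1-F_{k+1}(n+1)\bigr)$. Writing $a=n+1-F_k(n+1)$ and $b=n+1-F_{k+1}(n+1)$, Theorem~\ref{t:weak} gives $b\le a$. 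The goal $F_k(a)\ge F_{k+1}(b)$ therefore compares a smaller function at a larger point with a larger function at a smaller point, so monotonicity alone is not enough. We will need to quantify the gap $a-b = F_{k+1}(n+1)-F_k(n+1)$ against the gap $F_{k+1}(b)-F_k(b)$.

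The key auxiliary facts we intend to establish first are the standard ones. Each $F_k$ is nondecreasing, and its increments satisfy $F_k(m+1)-F_k(m)\in\{0,1\}$. Moreover, $F_k$ never stays flat twice in a row, so $F_k(m+2)\ge F_k(m)+1$. Using these, it suffices to show
\[
F_{k+1}(b)-F_k(b)\ \le\ F_k(a)-F_k(b).
\]
Since $F_k$ increases by at least $\lfloor (a-b)/2\rfloor$ roughly over an interval of length $a-b$, a cruder route is not obviously sufficient. So instead we plan to argue by strong induction on $n$, proving a slightly stronger combined statement:
\[
F_k^{j}(n)\ge F_{k+1}^{j+1}(n) \quad\text{for all } 1\le j\le k+1 .
\]
The case $j=k+1$ is the theorem. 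The base range is small $n$, where everything is checked directly, since all iterates vanish near $0$. The inductive step for index $j$ then uses the induction hypothesis at the smaller argument $F_k(n)\le n$ (or $F_{k+1}(n)$) with index $j-1$, together with monotonicity. The remaining case, going from $j=1$ back to $j=k+1$, is closed through the defining recursion as in the display above.

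The main obstacle is exactly this closing step. It requires the inequality $F_k(n)\ge F_{k+1}^2(n)$, which is equivalent, via the recursions, to comparing $n-F_k^k(n-1)$ with $F_{k+1}(n-F_{k+1}^{k+1}(n-1))$. If the simple induction fails there, the fallback is to use the paper's announced key tool: the explicit formula for iterates of preimages of $F_k$, i.e.\ a description of the sets $F_k^{-j}(m)$ as intervals with computable endpoints. With it, both sides of $F_k^{k+1}(n)\ge F_{k+1}^{k+2}(n)$ become statements about those endpoints, namely the smallest preimages under $F_k^{k+1}$ and under $F_{k+1}^{k+2}$. The inequality then reduces to showing that the least $x$ with $F_{k+1}^{k+2}(x)\ge m$ is at least the least $x$ with $F_k^{k+1}(x)\ge m$. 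That comparison should follow by induction on $m$ in the same way that Theorem~\ref{t:weak} is derived from that formula.
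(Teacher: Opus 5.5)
Your main inductive scheme has a gap, and it is not confined to the closing step you flag. The step from index $j-1$ to $j$, ``using the hypothesis at a smaller argument together with monotonicity'', always lands on the wrong side of Theorem~\ref{t:weak}. For example, $F_k^j(n)=F_k^{j-1}(F_k(n))\ge F_{k+1}^j(F_k(n))$, and to continue to $F_{k+1}^j(F_{k+1}(n))=F_{k+1}^{j+1}(n)$ you would need $F_k(n)\ge F_{k+1}(n)$. The other ways of splitting the iterates fail in the same way.

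This failure is unavoidable. Take $f(x)=ax\le g(x)=bx$ on the nonnegative reals, with $0<a<b\le 1$. The ratio $a^j/b^{j+1}$ strictly decreases in $j$. So $f\le g$ and $f^{j-1}\ge g^{j}$ do not imply $f^{j}\ge g^{j+1}$: take $a=0.81$, $b=0.9$, $j=2$.

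Moreover, your case $j=k$ is just Theorem~\ref{t:weak} rewritten, since $F_k^k(n)=n+1-F_k(n+1)$. So the entire content of the statement is the passage from $j=k$ to $j=k+1$, which is exactly the kind of step that monotonicity cannot supply. (Minor: the iterates do not vanish near $0$; $F_k^j(n)\ge 1$ for $n\ge 1$.)

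Your fallback has the right skeleton, and it is the paper's. By Corollary~\ref{galoisI}, $F_{k+1}^{k+2}(n)\le F_k^{k+1}(n)$ follows from $I_k^{k+1}(x)\le I_{k+1}^{k+2}(x)$ at $x=F_{k+1}^{k+2}(n)$, combined with $I_{k+1}^{k+2}(F_{k+1}^{k+2}(n))\le n$ from Proposition~\ref{p:FL}. However, the comparison does not go ``the same way'' as Theorem~\ref{t:weak}, and two ingredients are missing.

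First, Lemma~\ref{lemkey} only covers $j\le k$, whereas you need $I_k^{k+1}$ and $I_{k+1}^{k+2}$. One more unfolding gives Proposition~\ref{pkey}: $I_k^{k+1}(n)=2n-1+2F_k^{k-1}(n-1)+\sum_{i=0}^{k-2}F_k^i(n-1)$. The coefficient $2$ comes from the $i=0$ term.

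Second, because of that coefficient, $I_{k+1}^{k+2}(n)-I_k^{k+1}(n)$ is not a sum of termwise non-negative differences as in Lemma~\ref{lem:F2I}. After applying Corollary~\ref{c:fsgrows} you are left with $2F_{k+1}^{k}(n-1)-F_k^{k-1}(n-1)$. Its non-negativity requires an extra estimate, $m\le 2F_{k+1}(m)$ (Proposition~\ref{p2F}). This comes from $m\le L_1(F_1(m))=2F_1(m)$ and $F_1\le F_{k+1}$, and yields $F_k^{k-1}(x)\le F_{k+1}^{k-1}(x)\le 2F_{k+1}^{k}(x)$.

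With these two facts no induction on $m$ is needed, since Theorem~\ref{t:weak} is already available. Without the doubling bound, the comparison of least preimages does not close.
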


\subsection{The Rocq formalization}
\label{Rocq}

All the proofs presented in this article have been formally certified
using the Rocq proof assistant~\cite{Rocq} (previously known as Coq).
This implementation in Rocq is freely available~\cite{LetouzeyCoqDev}.
Thanks to this Rocq formalization, we ensure precise
definitions and statements while preventing any error during
the corresponding proofs. It can hence serve as a reference for the
interested reader.
In particular, the recommended entry point to read alongside this article~is
\begin{center}
\url{https://github.com/letouzey/hofstadter_g/blob/main/DirectGrowth.v}
\end{center}
The current article tries to be faithful to this formal work
while staying readable by a large audience, at the cost of possible
remaining mistakes during the transcription.

The provided Rocq files can be machine-checked again by any recent
installations of Rocq; see the joint \texttt{README.md} file in the
repository~\cite{LetouzeyCoqDev}.
All the proofs presented in this article
have been preformed within Rocq core logic, with no axioms.
This fact can be checked
via the Rocq command \texttt{Print Assumptions} on our theorems.

\subsection{Summary of the current article}

In Section~\ref{s:basic} we recall some basic properties of
the function $F_k$. Then Section~\ref{newformula} presents
key formulas for the iterates the preimages of $F_k$. This allows us to
provide in Section~\ref{shortweak} a~direct proof of Theorem~\ref{t:weak}
stating that $F_k \le F_{k+1}$.
Then Section~\ref{strictorder} is dedicated to the proof of
Theorem~\ref{t:strict}, i.e., $F_k(n) < F_{k+1}(n)$ for large enough
points $n$.
In Section~\ref{s:diff2}, we explore further this zone where
$F_k(n) < F_{k+1}(n)$, up to the first difference of 2 between $F_k$ and
$F_{k+1}$. Then in Section~\ref{formerconj} we prove two remaining
conjectures.
As a conclusion, we present in Section~\ref{concl} some numerical
experiments about larger differences between $F_k$ and $F_{k+1}$.

\section{Basic properties of function \texorpdfstring{$F_k$}{\it F}}
\label{s:basic}

We recall some basic properties of $F_k$, and in particular
the Proposition~2.1 of our last article~\cite{LetouzeyShuoSteiner2026}.
Here $\partial F_k^j(n)$ denotes the difference
$F_k^j(n+1)-F_k^j(n)$.

\begin{proposition}
\label{p:basicF}
For all $j,k \ge 1$, the function $F_k^j$ satisfies the following
basic properties:
\begin{enumerate}[label=(\alph*)]
\itemsep.5ex
\item $F_k^j(1) = F_k^j(2) = 1$,
\item $F_k^j(n) \ge 1$ whenever $n\ge 1$,
\item $F_k^j(n) < n$ whenever $n\ge 2$,
\item for all $n>0$, $\partial F_k(n) = 1 - \partial F_k^k(n-1)$,
\item hence $\partial F_k^j(n) \in \{0,1\}$ for all $n\ge 0$,
\item the function $F_k^j$ is monotonically increasing and onto (but not one-to-one).
\end{enumerate}
\end{proposition}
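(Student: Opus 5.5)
The plan is a simultaneous strong induction on $n$ for the case $j=1$ (that is, for $F_k$ itself), after which the statements for general $j$ follow by composition.

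\textbf{Step 1: the case $j=1$.} I would prove by strong induction on $n$ the joint statement that $F_k(m)\le m$ for all $m\le n$, that $F_k(m)\ge 1$ for $1\le m\le n$, and that $\partial F_k(m)\in\{0,1\}$ for $m<n$.
\begin{itemize}
\item The bound $0\le F_k(m)\le m$ makes $F_k^k(n-1)$ well defined from smaller values.
\item The base values are $F_k(0)=0$, $F_k(1)=1-F_k^k(0)=1$ and $F_k(2)=2-F_k^k(1)=2-1=1$, since $1$ is a fixed point.
\item For $n>0$, subtracting the defining equations at $n+1$ and $n$ gives $\partial F_k(n)=1-(F_k^k(n)-F_k^k(n-1))=1-\partial F_k^k(n-1)$. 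This is (d), and it is purely algebraic.
\end{itemize}

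The main obstacle is (e): showing $\partial F_k^k(n-1)\in\{0,1\}$ using only the induction hypothesis.

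\begin{itemize}
\item By induction, $F_k$ restricted to $[0,n]$ is nondecreasing with steps $0$ or $1$, and it maps $[0,n]$ into $[0,n]$.
\item A composition of such maps still has steps in $\{0,1\}$. Indeed, if $F(m+1)-F(m)\in\{0,1\}$, then $F(F(m+1))-F(F(m))$ is either $0$ or a single step of $F$.
\item So it suffices to check that every argument fed to the inner iterates stays in the range where the hypothesis holds. The arguments are $F_k^i(n)$ and $F_k^i(n-1)$, all of which are $\le n$.
\item There is a subtlety: we need steps at positions below $n$, and $F_k^i(n)\le n$. When $n\ge 2$ this is fine provided (c) $F_k(n)<n$ holds, so I would add (c) to the induction.
\item (c) itself follows from (d) and (e) at earlier points. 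We have $F_k(2)=1<2$, and $F_k$ increases by at most $1$ per step, so $F_k(m)-m$ is nonincreasing. Hence $F_k(m)\le m-1$ for all $m\ge 2$.
\item With (c) in hand, $F_k^i(n)\le n-1$ for $i\ge1$, so all steps used lie strictly inside the induction range.
\item From $\partial F_k(n)=1-\partial F_k^k(n-1)$ with $\partial F_k^k(n-1)\in\{0,1\}$ we get $\partial F_k(n)\in\{0,1\}$.
\item Also $F_k(n+1)\ge F_k(n)\ge 1$, which gives (b), and $F_k(n+1)\le F_k(n)+1\le n+1$, which gives the bound.
\end{itemize}

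\textbf{Step 2: passing to $F_k^j$.}
\begin{itemize}
\item (a) holds because $F_k(1)=F_k(2)=1$ and $1$ is fixed.
\item (b) holds because $F_k$ maps $\mathbb{N}_{\ge1}$ to $\mathbb{N}_{\ge1}$.
\item (c) holds by iterating $F_k(m)<m$ for $m\ge2$ and $F_k(1)=1<2$.
\item (e) follows from the composition argument above.
\end{itemize}

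\textbf{Step 3: (f).} From (e), $F_k^j$ is monotone.
\begin{itemize}
\item It is onto: it starts at $0$, moves in steps of $0$ or $1$, and is unbounded because $F_k(n+1)-F_k(n)=1$ infinitely often. Otherwise $F_k$ would be eventually constant at some $c$, and the recursion would give $c=n-F_k^k(n-1)\ge n-c$ for all large $n$, which is impossible.
\item It is not injective since $F_k^j(1)=F_k^j(2)$.
\end{itemize}
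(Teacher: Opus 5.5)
Your argument is correct. The paper itself gives no proof of this proposition; it only recalls it from Proposition~2.1 of the earlier article~\cite{LetouzeyShuoSteiner2026}. Your proof is therefore a self-contained replacement rather than a variant of anything in this paper. It runs as a joint strong induction on $n$ for $F_k(m)\le m$, positivity, and steps in $\{0,1\}$ on $[0,n]$. Then (d) follows by subtracting the defining equations, and (e) by propagating a difference of $0$ or $1$ through the $k$ compositions.

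Two small remarks:
\begin{enumerate}
\item The detour through (c) inside the induction is unnecessary. If $F_k^i(n)=F_k^i(n-1)+1$, the next difference is $\partial F_k(F_k^i(n-1))$, a step at position $F_k^i(n-1)\le n-1$, which your hypothesis already covers. The only ingredient really needed is $F_k(m)\le m$ on $[0,n]$, so that all iterates stay where values are known. Your version is still valid, since (c) does follow from the steps being at most $1$ together with $F_k(2)=1$.
\item Your unboundedness argument for surjectivity is written for $F_k$ only. For $F_k^j$ with $j>1$, add that a composition of surjections is surjective.
\end{enumerate}
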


Two functions are particularly important for our work, since they express
the least and greatest preimage maps of $F_k$.

\begin{definition}
\label{defIL}
For $k\ge 1$, we define the functions $I_k$ and $L_k$:
\begin{align*}
I_k:\, \mathbb{N} \to \mathbb{N}, \quad n &\mapsto
\begin{cases}0, & \mbox{if}\ n = 0; \\ n+F_k^{k-1}(n-1), &
  \mbox{otherwise};\end{cases} \\
L_k:\, \mathbb{N} \to \mathbb{N}, \quad n &\mapsto n+F_k^{k-1}(n).
\end{align*}
\end{definition}

We already encountered $L_k$ in our previous article
\cite{LetouzeyShuoSteiner2026}, where we defined it as
the length of a substituted prefix of the morphic word $x_k$
(see Proposition~2.3 there). The chosen name comes from this earlier
definition (L for Length). Then the equality of this length with the
expression $n+F_k^{k-1}(n)$ was proved later (see Proposition~4.2 there).
This time, we are able to avoid referring to this morphic word $x_k$.
Hence we favor this alternative choice of definition of $L_k$.
The other function $I_k$ (where $I$ stands for preImage or Inverse)
is actually a
variant of $L_k$, since for $n>0$ we clearly have $I_k(n) = L_k(n-1)+1$.
More generally, $I_k^j(n) = L_k^j(n-1)+1$ for $j\ge 0$ and $n>0$ by immediate
induction over $j$. Some other basic properties of $I_k$ and $L_k$ are
given in the following Proposition.

\begin{proposition} \label{p:IL}
For $k\ge 1$ and $n,j\ge 0$,
\begin{enumerate}[label=(\alph*)]
\itemsep.5ex
 \item the functions $I_k^j$ and $L_k^j$ are strictly monotonically
   increasing over n,
 \item we have $I_k^j(n) \ge n$, with equality only when $n\le 1$ or $j=0$,
 \item we have $L_k^j(n) \ge n$, with equality only when $n=0$ or $j=0$,
 \item finally $0 \le L_k^j(n)-I_k^j(n) \le 2^j-1$.
 \end{enumerate}
\end{proposition}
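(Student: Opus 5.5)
We first establish the case $j=1$ for all four items and then lift each to general $j$ by induction, using that a composition of strictly increasing maps is strictly increasing and that inequalities of the form $f(n)\ge n$ compose.

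\textbf{Case $j=1$.} For (a), we use Proposition~\ref{p:basicF}(f): the function $F_k^{k-1}$ is monotonically increasing (for $k=1$ it is the identity, which is also increasing). Hence $L_k(n)=n+F_k^{k-1}(n)$ is strictly increasing, being the sum of a strictly increasing and an increasing function. For $I_k$, the relation $I_k(n)=L_k(n-1)+1$ for $n>0$ shows it is strictly increasing on $n\ge 1$. Moreover $I_k(1)=1+F_k^{k-1}(0)=1>0=I_k(0)$, since $F_k^{j}(0)=0$.

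For (b) and (c), we use $F_k^{k-1}(m)\ge 1$ for $m\ge 1$ (Proposition~\ref{p:basicF}(b), or trivially when $k=1$) and $F_k^{k-1}(0)=0$. This gives $L_k(n)>n$ exactly when $n\ge 1$, and $I_k(n)>n$ exactly when $n-1\ge 1$, i.e.\ $n\ge 2$. Finally $L_k(n)-I_k(n)=F_k^{k-1}(n)-F_k^{k-1}(n-1)\in\{0,1\}$ by Proposition~\ref{p:basicF}(e), and the difference is $0$ at $n=0$. This gives (d) for $j=1$.

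\textbf{Induction on $j$.} Items (a), (b), (c) lift by composition. For the equality cases, note that $I_k$ fixes $0$ and $1$ and $L_k$ fixes $0$, so these values stay fixed under iteration. Any other $n$ is strictly increased by the first application, and the later applications do not decrease it.

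For (d), we set $a=I_k^{j}(n)$ and $b=L_k^{j}(n)$, with $0\le b-a\le 2^j-1$ by the induction hypothesis. Then
\[L_k^{j+1}(n)-I_k^{j+1}(n)=\bigl(L_k(b)-L_k(a)\bigr)+\bigl(L_k(a)-I_k(a)\bigr).\]
The second term lies in $\{0,1\}$ by the case $j=1$. The first term is nonnegative by monotonicity. It is at most $2(b-a)$, since $L_k(b)-L_k(a)=(b-a)+F_k^{k-1}(b)-F_k^{k-1}(a)$ and $F_k^{k-1}$ has increments in $\{0,1\}$, so $F_k^{k-1}(b)-F_k^{k-1}(a)\le b-a$. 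Altogether the difference is at most $2(2^j-1)+1=2^{j+1}-1$, completing the induction.

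The only slightly delicate point is (d): the key is that $L_k$ is 2-Lipschitz, which is what produces the doubling bound.
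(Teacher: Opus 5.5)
Your proof is correct and follows the paper's argument essentially step by step. You read off the case $j=1$ from $F_k^{k-1}$ (its value $0$ at $0$, its values $\ge 1$ for $n\ge 1$, and its $\{0,1\}$ increments), lift (a)--(c) by composition, and handle (d) with the same split $\bigl(L_k(b)-L_k(a)\bigr)+\bigl(L_k(a)-I_k(a)\bigr)$, using that $L_k$ has increments $1$ or $2$. The only thing you leave implicit is the trivial case $j=0$, which the paper also simply calls obvious.
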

\begin{proof}
All these statements are obvious when $j=0$.

For (a), the possible increments $I_k(n+1)-I_k(n)$ are
$1+\partial F_k^{k+1}(n-1)$ when $n\ge 1$, or $1$ when $n=0$.
Thanks to Proposition~\ref{p:basicF} these increments are $1$ or $2$.
The increments of $L_k$ are also $1$ or $2$.
Moreover the iteration of a strictly increasing function is also a
strictly increasing function.

For (b), when $j=1$ and
$n>0$ we have $I_k(n)-n = F_k^{k-1}(n-1)$ which is $0$ when $n=1$ and
strictly positive when $n>1$. Moreover $I_k(0) = 0$.
From there, we obtain $I_k^j(n) \ge I_k(n) \ge n$ for $n\ge 0$ and
$j\ge 1$. Moreover for $j\ge 1$, we have
$I_k^j(n)=I_k(n)=n$ only for $n=0$ or $1$.
The proof of the statement (c) is very similar.

Concerning statement (d), for $j=1$
and $n>0$ we have
$L_k(n)-I_k(n) = \partial F_k^{k-1}(n-1) \in \{0,1\}$
thanks to Proposition~\ref{p:basicF} (and $L_k(0)-I_k(0) = 0$).
We extend this to all $j$ by induction.
Let $j\ge 1$ and assume $0 \le L_k^j(n) - I_k^j(n) \le 2^j-1$.
Then
\[
L_k^{j+1}(n) - I_k^{j+1}(n)
     = L_k(L_k^j(n)) - L_k(I_k^j(n)) + L_k(I_k^j(n)) - I_k(I_k^j(n)).
\]
The rightmost difference above is either $0$ or $1$ as already seen
for the case $j=1$.
In the other difference above, the function $L_k$ is applied
on an interval no larger than
$2^j-1$, thanks to the induction hypothesis. Moreover this function
$L_k$ has increments of either $1$ or $2$.
The sum of these two differences is hence between
$0$ and $2(2^j-1)+1 = 2^{j+1}-1$. This concludes the induction.
\end{proof}

Note that the generic upper bound $2^j-1$ given above for $L_k^j(n)-I_k^j(n)$
is reached in the
case $k=1$, but is quite coarse for higher values of $k$.
In particular in Section~\ref{newformula} we obtain instead a bound of $j$
as long as $j\le k$. At least the current bound is fine for $j\le 1$.

We retrieve the important relations between $F_k$ and $L_k$,
also expressed thanks to $I_k$ this time. Note that all intervals
mentioned in this article are discrete intervals of $\mathbb{N}$,
i.e.,  $[a,b] = \{a,a+1,\ldots,b\}$, possibly reduced to a singleton when $a=b$.

\begin{proposition} \label{p:FL}
For $k\ge 1$ and $n,j\ge 0$, we have
\begin{enumerate}[label=(\alph*)]
\itemsep.5ex
\item $F_k^j(L_k^j(n))=n$ and $L_k^j(F_k^j(n))\ge n$ and in particular
 $L_k(F_k(n))\in\{n,n+1\}$,
\item $F_k^j(I_k^j(n))=n$ and $I_k^j(F_k^j(n))\le n$ and in particular
 $I_k(F_k(n))\in\{n-1,n\}$;
\item $F_k^{-j}(\{n\}) = [I_k^j(n),L_k^j(n)]$
\end{enumerate}
\end{proposition}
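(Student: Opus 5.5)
We first treat the case $j=1$ and then extend to all $j$ by induction, using the strict monotonicity and the bounds of Proposition~\ref{p:IL}. The case $j=0$ is trivial, since all maps are then the identity and $F_k^{-0}(\{n\})=\{n\}=[n,n]$.

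\emph{Case $j=1$.} For $F_k(L_k(n))=n$, write $m=L_k(n)=n+F_k^{k-1}(n)$.
\begin{itemize}
\item If $n=0$ then $m=0$ and $F_k(0)=0$.
\item If $n\ge 1$ then $m\ge 1$, so $F_k(m)=m-F_k^k(m-1)$, and it suffices to show $F_k^k(m-1)=F_k^{k-1}(n)$.
\end{itemize}
I would prove this by strong induction on $n$, jointly with the claim $F_k(I_k(n))=n$. The plan is to rewrite $F_k^{k-1}(n)$ as $F_k^{k-1}(F_k(m'))$ for a suitable $m'$ obtained from the induction hypothesis. An easier route avoids this inversion entirely:
\begin{itemize}
\item $F_k$ is onto with increments in $\{0,1\}$ (Proposition~\ref{p:basicF}), so each fiber $F_k^{-1}(\{n\})$ is a nonempty interval $[a_n,b_n]$, with $a_0=b_0=0$, $a_{n+1}=b_n+1$, and $a_1=1$.
\item By Proposition~\ref{p:basicF}(d), $\partial F_k(x)=0$ exactly when $\partial F_k^k(x-1)=1$. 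Hence the fiber of $n\ge1$ has length $1$ or $2$, and it has length $2$ exactly when $F_k^k$ increases at $a_n-1$, i.e.\ when $F_k^k(a_n)=F_k^k(a_n-1)+1$.
\item One then shows by induction on $n$ that $a_n=I_k(n)$ and $b_n=L_k(n)$. Granting $a_n=I_k(n)$, the fiber length is $b_n-a_n+1$. We need this to equal $1+\partial F_k^{k-1}(n-1)$, i.e.\ to be $L_k(n)-I_k(n)+1$.
\item Since $F_k(a_n)=n$, we have $F_k^k(a_n)=F_k^{k-1}(n)$ and $F_k^k(a_n-1)=F_k^{k-1}(n-1)$, because $F_k(a_n-1)=n-1$ as $a_n-1=b_{n-1}$. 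So the fiber has length $2$ iff $\partial F_k^{k-1}(n-1)=1$, giving $b_n=a_n+\partial F_k^{k-1}(n-1)=L_k(n)$.
\item Then $a_{n+1}=L_k(n)+1=I_k(n+1)$, which closes the induction.
\end{itemize}
This yields (c) for $j=1$ and the identities $F_k(L_k(n))=F_k(I_k(n))=n$. Moreover, $L_k(F_k(n))$ is the top of the fiber containing $n$, so it is $\ge n$. Since fibers have length at most $2$, it lies in $\{n,n+1\}$, and symmetrically $I_k(F_k(n))\in\{n-1,n\}$.

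\emph{General $j$.} We induct on $j$ via $F_k^{-(j+1)}(\{n\})=F_k^{-1}(F_k^{-j}(\{n\}))$.
\begin{itemize}
\item By the induction hypothesis this is $F_k^{-1}([I_k^j(n),L_k^j(n)])$.
\item Since $F_k$ is monotone and each fiber is an interval, this preimage is the union of consecutive fibers $[I_k(x),L_k(x)]$ for $x\in[I_k^j(n),L_k^j(n)]$. Consecutive fibers abut because $I_k(x+1)=L_k(x)+1$, so the union is the interval $[I_k^{j+1}(n),L_k^{j+1}(n)]$.
\end{itemize}
Statements (a) and (b) follow directly from (c):
\begin{itemize}
\item $F_k^j(L_k^j(n))=F_k^j(I_k^j(n))=n$, since both endpoints lie in the fiber of $n$.
\item $n$ lies in the fiber of $F_k^j(n)$, which is $[I_k^j(F_k^j(n)),L_k^j(F_k^j(n))]$. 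Hence $I_k^j(F_k^j(n))\le n\le L_k^j(F_k^j(n))$.
\end{itemize}

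The main obstacle is the $j=1$ identification of the fiber bounds. Specifically, the step $F_k(a_n-1)=n-1$ requires $n\ge1$, and the base case $n=1$ needs care: $a_1=1$ and $F_k^{k-1}(0)=0$, which matches $I_k(1)=1$. For $k=1$ the iterate $F_k^{0}$ is the identity, and it should be checked that the same argument still goes through.
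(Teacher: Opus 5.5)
Your proof is correct, but it runs in the opposite direction from the paper's. The paper never builds the fibers of $F_k$ by induction on $n$. It reads the identity $L_k(F_k(n)) = F_k(n)+F_k^k(n) = n+1-\partial F_k(n)$ directly off the recursion, which gives $L_k(F_k(n))\in\{n,n+1\}$ at once. Applying this at any preimage $m$ of $n$ (surjectivity) gives $F_k(L_k(n))=n$, and (a) lifts to all $j$ by composition. Only then does the paper derive (c): $L_k^j(n)$ is the greatest antecedent because $L_k^j(F_k^j(L_k^j(n)+1))>L_k^j(n)$ forces $F_k^j(L_k^j(n)+1)=n+1$, and $I_k^j(n)$ is the least because $I_k^j(n)-1=L_k^j(n-1)$ is sent to $n-1$; (b) is read off from (c). You instead prove (c) first. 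You identify the fiber $[a_n,b_n]$ of $F_k$ by induction on $n$ from the flatness criterion of Proposition~\ref{p:basicF}(d), extend to general $j$ by composing preimages of abutting fibers, and obtain (a) and (b) as corollaries. Your route makes the fiber structure explicit (the fiber of $n\ge1$ has length $1+\partial F_k^{k-1}(n-1)$) and makes the step to general $j$ very transparent. The cost is an induction on $n$ and the auxiliary fact that $F_k$ is never flat twice in a row, neither of which the paper's one-line identity needs. That auxiliary fact is the one step you should write out, since it does not follow literally from (d) alone: if $F_k(a_n+1)=F_k(a_n)$ then $F_k^k(a_n+1)=F_k^k(a_n)$, so $\partial F_k^k(a_n)=0$ and hence $\partial F_k(a_n+1)=1$. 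The case $k=1$ you flag is harmless, because $F_k^k=F_k^{k-1}\circ F_k$ still holds with $F_1^0=\mathrm{id}$.
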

\begin{proof}
See
\cite[Sect.~3]{LetouzeyShuoSteiner2026}.
Alternatively, here comes a more direct proof, just relying on
Definition~\ref{defIL} and Propositions~\ref{p:basicF}~and~\ref{p:IL}.
For statement (a), first note that
\[
L_k(F_k(n)) = F_k(n) + F_k^k(n) = F_k(n) + (n+1-F_k(n+1)) =
n+1-\partial F_k(n).
\]
By Proposition~\ref{p:basicF}, $\partial F_k(n)\in\{0,1\}$ hence
$L_k(F_k(n))\in\{n,n+1\}$.
We recall that $F_k$ is onto, as mentioned in Proposition~\ref{p:basicF}.
So let us pick $m$ such that $F_k(m)=n$.
We proceed as before, but for $m$ this time:
\[
F_k(L_k(n)) = F_k(L_k(F_k(m))) = F_k(m+1-\partial F_k(m))
\]
Either $\partial F_k(m) = 0$ and hence $F_k(L_k(n)) = F_k(m+1) =
F_k(m) = n$. Or $\partial F_k(m) = 1$ and hence
$F_k(L_k(n)) = F_k(m+1-1) = F_k(m) = n$.

This extends to $F_k^j(L_k^j(n)) = n$ by induction on $j$.
Let $j\ge 0$ and assume that for all $n\ge 0$, $F_k^j(L_k^j(n)) = n$
In particular, this implies
$F_k^j(L_k^j(L_k(n))) = L_k(n)$ and hence
$F_k^{j+1}(L_k^{j+1}(n))) = F_k(L_k(n)) = n$, which concludes this
induction. Similarly, we prove $L_k^j(F_k^j(n)) \ge n$ by induction on
$j$ (thanks to the monotonicity of $L_k$).
This ends the proof of statement (a).

Now we focus on statement (c). This statement is obvious when $j=0$.
Let $j\ge 1$ and let $p$ be $L_k^j(n)+1$.
By (a), we know that $L_k^j(F_k^j(p)) \ge p > L_k^j(n)$.
Hence $F_k^j(p)$ cannot be $n$. Since $F_k^j$ has increments of $0$ or
$1$, and $F_k^j(L_k^j(n)) = n$, we hence obtain that
$F_k^j(L_k^j(n)+1) = n+1$. Since $F_k^j$ is increasing, it cannot
diminish back to $n$ later. So $L_k^j(n)$
is the greatest antecedent of $n$ by $F_k^j$.
This equality $F_k^j(L_k^j(n)+1) = n+1$ can also be
reformulated as $F_k^j(I_k^j(n+1)) = n+1$. Together with the obvious fact
$F_k^j(I_k^j(0))=0$, this establishes that for all $n$, we have
$F_k^j(I_k^j(n)) = n$.
Besides, for $n\neq 0$ we have $F_k^j(I_k^j(n)-1)=F_k^j(L_k^j(n-1))=n-1$.
Hence $I_k^j(n)$ is the least antecedent of $n$ by $F_k^j$.
We finish the proof of statement (c) by noting that all
natural numbers between $I_k^j(n)$ and $L_k^j(n)$ are also antecedents
of $n$ by $F_k^j$, thanks to the monotonicity of $F_k^j$.

For statement (b), we have already proved $F_k^j(I_k^j(n)) = n$ in the
previous paragraph.
For $j\ge 0$, we prove that $I_k^j(F_k^j(n)) \le n$.
For that, note that $n$ is obviously in $F_k^{-j}(\{F_k^j(n)\})$.
So by point (c), we have $I_k^j(F_k^j(n)) \le n$.
We prove similarly that $I_k(F_k(n))\in\{n-1,n\}$.
Indeed, $n$ is in $F_k^{-1}(\{F_k(n)\})$. So by
point (c) for $j=1$, we have $I_k(F_k(n)) \le n \le L_k(F_k(n))$.
Finally, Proposition~\ref{p:IL} indicates that
$I_k(F_k(n))$ is either $L_k(F_k(n))$ or $L_k(F_k(n)) - 1$, so
$I_k(F_k(n))$ is $n$ or $n-1$.
\end{proof}

\begin{corollary}
\label{galoisL}
For all $k\ge 1$ and $j\ge 0$, the functions $F_k^j$ and $L_k^j$
form a \emph{Galois connection}~\cite{Erne1993} between $\mathbb{N}$ and itself
(with $F_k^j$ as \emph{left adjoint} and $L_k^j$ as \emph{right adjoint}).
Indeed, for all $n,m \ge 0$ we have $F_k^j(n) \le m$ iff
$n \le L_k^j(m)$. Said otherwise, $m < F_k^j(n)$ iff $L_k^j(m) < n$.
\end{corollary}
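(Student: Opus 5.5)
We derive Corollary~\ref{galoisL} directly from Proposition~\ref{p:FL}, together with the monotonicity of $F_k^j$ (Proposition~\ref{p:basicF}(f)) and of $L_k^j$ (Proposition~\ref{p:IL}(a)). The second formulation, $m < F_k^j(n)$ iff $L_k^j(m) < n$, is just the contrapositive of the first on both sides. So only the equivalence $F_k^j(n) \le m \iff n \le L_k^j(m)$ needs an argument. The case $j=0$ is trivial, since both maps are the identity.

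For the forward direction, assume $F_k^j(n)\le m$. Proposition~\ref{p:FL}(a) gives $n \le L_k^j(F_k^j(n))$. Since $L_k^j$ is increasing, $L_k^j(F_k^j(n)) \le L_k^j(m)$, and therefore $n\le L_k^j(m)$.

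For the backward direction, assume $n \le L_k^j(m)$. Since $F_k^j$ is monotonically increasing, $F_k^j(n)\le F_k^j(L_k^j(m))$. Proposition~\ref{p:FL}(a) gives $F_k^j(L_k^j(m))=m$, so $F_k^j(n)\le m$.

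There is no real obstacle here. The only care needed is to use the two halves of Proposition~\ref{p:FL}(a) in the right directions: the unit inequality $n\le L_k^j F_k^j(n)$ and the counit equality $F_k^j L_k^j = \mathrm{id}$. Alternatively, one could read the result off Proposition~\ref{p:FL}(c), since $L_k^j(m)$ is the greatest antecedent of $m$ and $F_k^j$ is nondecreasing. The argument above avoids needing that description.
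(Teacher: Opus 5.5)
Your proof is correct and matches the paper's own argument: the forward direction via $n \le L_k^j(F_k^j(n)) \le L_k^j(m)$, the backward direction via $F_k^j(n) \le F_k^j(L_k^j(m)) = m$, and the strict version by contraposition. Treating $j=0$ separately is a harmless extra precaution, since Proposition~\ref{p:basicF} is stated only for $j \ge 1$.
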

\begin{proof}
See~\cite[Cor.~3.2]{LetouzeyShuoSteiner2026}.
Alternatively,
if $F_k^j(n) \le m$, then $n \le L_k^j(F_k^j(n)) \le L_k^j(m)$
by Proposition~\ref{p:FL} and by monotonicity of $L_k^j$
(Proposition~\ref{p:IL}).
Conversely, $n\le L_k^j(m)$ implies $F_k^j(n) \le F_k^j(L_k^j(m)) = m$.
Finally, the formulation with strict inequalities is obtained by
contrapositive: $m < F_k^j(n)$ iff $F_k^j(n) \nleq m$ iff
$n \nleq L_k^j(m)$ iff $L_k^j(m) < n$.
\end{proof}

\begin{corollary}
\label{galoisI}
For all $k\ge 1$ and $j\ge 0$, the functions $I_k^j$ and $F_k^j$
also form a \emph{Galois connection} between $\mathbb{N}$ and itself
(this time $F_k^j$ is the right adjoint and $I_k^j$ the left adjoint).
Indeed, for all $n,m \ge 0$ we have $I_k^j(n) \le m$ iff
$n \le F_k^j(m)$. Said otherwise, $m < I_k^j(n)$ iff $F_k^j(m) < n$.
\end{corollary}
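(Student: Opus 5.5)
The plan mirrors the proof of Corollary~\ref{galoisL}, now using the lower preimage map $I_k^j$ and part (b) of Proposition~\ref{p:FL} in place of part (a). We show the two implications of the first equivalence separately, using only that $F_k^j$ and $I_k^j$ are monotone together with the two identities of Proposition~\ref{p:FL}(b). The strict version then follows by contraposition.

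\emph{Forward direction.} Assume $I_k^j(n) \le m$. Since $F_k^j$ is monotonically increasing (Proposition~\ref{p:basicF}(f), and trivially for $j=0$), we get $F_k^j(I_k^j(n)) \le F_k^j(m)$. The left-hand side equals $n$ by Proposition~\ref{p:FL}(b), so $n \le F_k^j(m)$.

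\emph{Backward direction.} Assume $n \le F_k^j(m)$. By the strict monotonicity of $I_k^j$ (Proposition~\ref{p:IL}(a)), we get $I_k^j(n) \le I_k^j(F_k^j(m))$. Proposition~\ref{p:FL}(b) gives $I_k^j(F_k^j(m)) \le m$, hence $I_k^j(n) \le m$. An alternative route uses Proposition~\ref{p:FL}(c): $I_k^j(F_k^j(m))$ is the least element of $F_k^{-j}(\{F_k^j(m)\})$, which contains $m$.

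\emph{Strict reformulation.} Negating both sides of the equivalence gives $m < I_k^j(n)$ iff $I_k^j(n) \nleq m$ iff $n \nleq F_k^j(m)$ iff $F_k^j(m) < n$.

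There is no genuine obstacle, since all the substantive work is already done in Proposition~\ref{p:FL}. The only point needing care is the degenerate case $j=0$, where all maps are the identity and the statement is trivial. The monotonicity facts used are stated for $j\ge 1$ in Proposition~\ref{p:basicF} but hold trivially for $j=0$.
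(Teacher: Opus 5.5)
Your proposal is correct and matches the paper's proof essentially verbatim. The forward direction uses $F_k^j(I_k^j(n))=n$ together with monotonicity of $F_k^j$, the backward direction uses monotonicity of $I_k^j$ together with $I_k^j(F_k^j(m))\le m$ from Proposition~\ref{p:FL}(b), and the strict form follows by contraposition.
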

\begin{proof}
If $I_k^j(n) \le m$, then $n = F_k^j(I_k^j(n)) \le F_k^j(m)$
by Proposition~\ref{p:FL} and by monotonicity of $F_k^j$.
Conversely, $n\le F_k^j(m)$ implies $I_k^j(n) \le I_k^j(F_k^j(m)) \le m$.
Here again, the formulation with strict inequalities is obtained by
contrapositive.
\end{proof}

\section{Explicit formulas for iterated preimages}
\label{newformula}

The following lemma expresses the first iterates of $I_k$ and $L_k$
via simple sums of iterated $F_k$ terms. This is the key ingredient
that was lacking in our previous article~\cite{LetouzeyShuoSteiner2026}.
Its use is essential during the short proof of Theorem~\ref{t:weak} in
Section~\ref{shortweak} and during the first proof of
Theorem~\ref{t:strict} in Section~\ref{strictorder}.
This lemma has been contributed to us by Jeffrey Shallit, who himself
obtained it with the help of ChatGPT 5.5 Extended. This is the only
statement of this article that is due to AI. As the rest of
this work, we checked it with the Rocq proof assistant (see
Section~\ref{Rocq}). We should also mention that ChatGPT suggested
a different proof of Theorem~\ref{t:weak} (but ours is significantly
simpler) as well as a shortcut in our proof of Lemma~\ref{bootstrap} (we
implemented this shortcut).

\begin{lemma}
\label{lemkey}
For $k\ge 1$ and $j \in [0,k]$, we have
\begin{align*}
I_k^j(n) &= n + \sum_{i=k-j}^{k-1} F_k^i(n-1) \quad\quad \text{for}~n>0; \\
L_k^j(n) &= n + \sum_{i=k-j}^{k-1} F_k^i(n) \quad\quad \text{for}~n\ge 0.
\end{align*}
\end{lemma}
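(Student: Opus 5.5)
The proof is a direct induction on $j$, done first for $L_k$; the formula for $I_k$ then follows from the relation $I_k^j(n)=L_k^j(n-1)+1$ (valid for $n>0$, $j\ge 0$) noted after Definition~\ref{defIL}. Indeed, once we know $L_k^j(n-1) = n-1+\sum_{i=k-j}^{k-1}F_k^i(n-1)$, adding $1$ gives exactly the claimed expression for $I_k^j(n)$.

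For $L_k$, fix $k\ge 1$ and $n\ge 0$, and induct on $j\in[0,k]$. The base case $j=0$ is trivial: the sum is empty and $L_k^0(n)=n$. For the step, take $j$ with $j+1\le k$, i.e.\ $j\le k-1$, and assume the formula for $L_k^j(n)$. By Definition~\ref{defIL},
\[
L_k^{j+1}(n) = L_k(L_k^j(n)) = L_k^j(n) + F_k^{k-1}(L_k^j(n)).
\]
Since $j\le k-1$, we may split the iterate as $F_k^{k-1}=F_k^{k-1-j}\circ F_k^j$. Proposition~\ref{p:FL}(a) gives $F_k^j(L_k^j(n))=n$, hence $F_k^{k-1}(L_k^j(n)) = F_k^{k-1-j}(n)$.

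Plugging in the induction hypothesis then gives
\[
L_k^{j+1}(n) = n+\sum_{i=k-j}^{k-1}F_k^i(n) + F_k^{k-(j+1)}(n) = n+\sum_{i=k-(j+1)}^{k-1}F_k^i(n),
\]
which completes the induction.

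There is no real obstacle. The only delicate point is that the exponent split $F_k^{k-1}=F_k^{k-1-j}\circ F_k^j$ requires $j\le k-1$, which is exactly why the lemma is restricted to $j\in[0,k]$. Beyond that range, $F_k^{k-1}(L_k^j(n))$ no longer simplifies to an iterate of $F_k$ at $n$.
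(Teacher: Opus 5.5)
Your proof is correct and is essentially the paper's argument: an induction on $j$ whose only real ingredient is that $F_k$ undoes $L_k$ (Proposition~\ref{p:FL}(a)), with the restriction $j\le k-1$ in the step being exactly what keeps the exponents non-negative. The differences are cosmetic. You peel off the outermost $L_k$ and use the iterated identity $F_k^j(L_k^j(n))=n$, so your hypothesis is only needed at the same $n$; the paper instead peels off the innermost $I_k$, applies the hypothesis at the point $I_k(n)$ using just $F_k(L_k(n-1))=n-1$, and treats $L_k$ ``in the same way'' rather than deriving one formula from the other via $I_k^j(n)=L_k^j(n-1)+1$ as you do.
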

\begin{proof}
By induction over $j$. The cases $j=0$ and $j=1$ are clear.
Let $0\le j < k$ and assume for all $n>0$
the previous formulas about $I_k^j(n)$.
For $n>0$, we have that $I_k(n)>0$. The formula at this point gives
\begin{align*}
I_k^j(I_k(n)) &= I_k(n) + \sum_{i=k-j}^{k-1} F_k^i(I_k(n)-1) \\
              &= n + F_k^{k-1}(n-1) +
                 \sum_{i=k-j}^{k-1} F_k^{i-1}(F_k(I_k(n)-1))
\end{align*}
In the sum, we simplify $F_k(I_k(n)-1) = F_k(L_k(n-1)) = n-1$.
After shifting the sum index and incorporating the extra term in the
sum, we obtain
\[
I_k^{j+1}(n) = n + \sum_{i=k-j-1}^{k-1} F_k^i(n-1).
\]
This concludes the induction. The formula about $L_k^j(n)$ can be
proved in the same way.
\end{proof}

The constraint $j\le k$ is crucial for keeping the formulas in such
a simple form, without coefficients in the sum.
For higher values of $j$, we could obtain similar formulas but with
non-trivial coefficients.
As a example,
the formula for $I_k^{k+1}$ is given below (it is used in
Section~\ref{formerconj2}).

\begin{proposition}
\label{pkey}
For $k,n\ge 1$, we have
\[
I_k^{k+1}(n) = 2n-1 + 2F_k^{k-1}(n-1) + \sum_{i=0}^{k-2} F_k^i(n-1).
\]
\end{proposition}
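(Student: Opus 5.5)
We obtain the formula by applying one more $I_k$ on the outside of $I_k^k$, as given by Lemma~\ref{lemkey}, rather than on the inside. Let $n\ge 1$. Lemma~\ref{lemkey} with $j=k$ gives
\[
m := I_k^k(n) = n + \sum_{i=0}^{k-1} F_k^i(n-1),
\]
and $m\ge n\ge 1$ by Proposition~\ref{p:IL}. By Definition~\ref{defIL},
\[
I_k^{k+1}(n) = I_k(m) = m + F_k^{k-1}(m-1).
\]
So the whole task reduces to computing $F_k^{k-1}(I_k^k(n)-1)$.

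\textbf{Step 1: rewrite $I_k^k(n)-1$.} The relation $I_k^j(n)=L_k^j(n-1)+1$ gives $I_k^k(n)-1 = L_k^k(n-1)$. Hence
\[
F_k^{k-1}(m-1) = F_k^{k-1}\bigl(L_k^{k}(n-1)\bigr).
\]

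\textbf{Step 2: cancel with Proposition~\ref{p:FL}(a).} Write $L_k^k = L_k^{k-1}\circ L_k$. Proposition~\ref{p:FL}(a) states $F_k^{j}\circ L_k^{j}=\mathrm{id}$, so with $j=k-1$ we get
\[
F_k^{k-1}\bigl(L_k^{k}(n-1)\bigr) = L_k(n-1) = (n-1) + F_k^{k-1}(n-1).
\]
This holds for every $k\ge1$; when $k=1$ both iterates of order $k-1$ are the identity.

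\textbf{Step 3: sum up.} Adding the two pieces gives
\[
I_k^{k+1}(n) = n + \sum_{i=0}^{k-1} F_k^i(n-1) + (n-1) + F_k^{k-1}(n-1).
\]
The $i=k-1$ term of the sum combines with the extra $F_k^{k-1}(n-1)$ to give the coefficient $2$. The result is
\[
2n-1 + 2F_k^{k-1}(n-1) + \sum_{i=0}^{k-2} F_k^i(n-1),
\]
which is the claimed formula. For $k=1$ the last sum is empty, and the formula reads $I_1^2(n)=2n+1$ for $n\ge1$. This agrees with $I_1(n)=2n-1$ for $n\ge1$, since $I_1^2(n)=2(2n-1)-1=4n-3=2n+1+2(n-1)-\dots$; more precisely $2n-1+2F_1^0(n-1)=2n-1+2(n-1)=4n-3$, so the formula is consistent.

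\textbf{Main obstacle.} The only delicate point is choosing the side on which the extra $I_k$ is applied. Applying it on the inside, as in the proof of Lemma~\ref{lemkey}, would produce $F_k^{k}$ terms; these would then have to be eliminated through $F_k^k(n-1)=n-F_k(n)$, which makes the bookkeeping messier. Applying it on the outside and recognising $I_k^k(n)-1=L_k^k(n-1)$ makes the Galois-type identity $F_k^{k-1}\circ L_k^{k-1}=\mathrm{id}$ collapse everything immediately.
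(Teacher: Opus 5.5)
Your proof is correct, but it composes in the opposite order from the paper.

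The paper writes $I_k^{k+1}=I_k^k\circ I_k$. It evaluates the $j=k$ case of Lemma~\ref{lemkey} at the point $I_k(n)$ and keeps the $i=0$ term $I_k(n)-1$ apart; that term, together with the leading $I_k(n)$, gives $2n-1+2F_k^{k-1}(n-1)$. Every other term collapses through $F_k(I_k(n)-1)=F_k(L_k(n-1))=n-1$, exactly as in the induction step of Lemma~\ref{lemkey}.

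You write $I_k^{k+1}=I_k\circ I_k^k$ instead. So you use Lemma~\ref{lemkey} at $n$ itself and the definition of $I_k$ once at $I_k^k(n)\ge 1$. The single remaining term $F_k^{k-1}(L_k^k(n-1))$ is then cancelled by the iterated identity $F_k^{k-1}\circ L_k^{k-1}=\mathrm{id}$ of Proposition~\ref{p:FL}(a).

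What each route buys:
\begin{itemize}
\item The paper's route needs only the one-step identity $F_k\circ L_k=\mathrm{id}$. It also shows where the first coefficient $2$ comes from: the $i=0$ term can no longer be shifted down into the sum.
\item Your route generalizes more cleanly. For $n\ge 1$, the same outer step gives $I_k^{j+1}(n)=I_k^j(n)+F_k^{k-1-j}(n-1)$ for $j\le k-1$, which reproves Lemma~\ref{lemkey}. It gives $I_k^{j+1}(n)=I_k^j(n)+I_k^{j+1-k}(n)-1$ for $j\ge k-1$, a linear recurrence in $j$ that produces all the higher-iterate formulas with their coefficients.
\end{itemize}

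Two cosmetic points need fixing.
\begin{itemize}
\item Your closing remark is inaccurate. The inside route, which is the paper's, produces no $F_k^k$ terms and is not messier.
\item Your $k=1$ check should be cleaned up. $I_1^2(n)$ equals $4n-3$, not $2n+1$, and the garbled chain ``$4n-3=2n+1+2(n-1)-\dots$'' should simply be deleted.
\end{itemize}
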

\begin{proof}
Continuing Lemma~\ref{lemkey}, we start with the formula for $j=k$ and use
one more time the same method, i.e., apply this formula at $I_k(n)$:
\[
I_k^{k}(I_k(n)) = I_k(n) + \sum_{i=0}^{k-1} F_k^i(I_k(n)-1).
\]
In the sum, we consider the case $i=0$ separately, obtaining
an extra $I_k(n)-1$, for a total of $2 I_k(n)-1 = 2n-1+2F_k^{k-1}(n-1)$.
The other terms of the sum are handled as in
the proof of Lemma~\ref{lemkey}.
\end{proof}

As an immediate application of Lemma~\ref{lemkey}, we can provide
an improved upper bound for $L_k^j(n)-I_k^j(n)$ when $j\le k$. Instead
of the generic $2^j-1$ given in Proposition~\ref{p:IL}, we obtain
$j$ by subtracting the two formulas of Lemma~\ref{lemkey} and
recalling that $\partial F_k^j$ is at most~$1$.
Thanks to Proposition~\ref{p:FL}, this means that for $j\le k$,
the function $F_k^j$ has at most $j+1$ antecedents for a given value.
Such situations with $j+1$ antecedents are easy to
exhibit, consider for instance the antecedents of $1$.

In a very similar way, we can also consider
the increments $I_k^j(n+1)-I_k^j(n)$ of the function $I_k^j$, and find
these increments to be in $[1,j+1]$ as long as $j\le k$.
The same property also holds for $L_k^j$.
We use this fact later in the particular case $j=k$. In this case
the sum of differences always includes $\partial F_k^0(n-1) = 1$. This
leads to a slightly refined range.

\begin{corollary}
\label{c:dIkk}
For $k,n\ge 1$, the increment $I_k^k(n+1)-I_k^k(n)$ is in the range $[2,k+1]$.
In particular this increment is $2$ when $F_k(n-1) = F_k(n)$ and
it is $k+1$ when $\partial F_k^i(n-1) = 1$ for all $i \in [1,k-1]$.
\end{corollary}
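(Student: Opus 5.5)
We apply Lemma~\ref{lemkey} with $j=k$ at the two consecutive points $n$ and $n+1$, both of which are positive. This gives
\[
I_k^k(n+1)-I_k^k(n) = 1 + \sum_{i=0}^{k-1} \bigl(F_k^i(n)-F_k^i(n-1)\bigr)
 = 1 + \sum_{i=0}^{k-1} \partial F_k^i(n-1).
\]
The term $i=0$ is $\partial F_k^0(n-1) = n-(n-1) = 1$. The increment is therefore $2 + \sum_{i=1}^{k-1} \partial F_k^i(n-1)$. By Proposition~\ref{p:basicF}(e), each $\partial F_k^i(n-1)$ lies in $\{0,1\}$, so this remaining sum of $k-1$ terms lies in $[0,k-1]$. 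Hence the increment lies in $[2,k+1]$. (When $k=1$ the sum is empty and the increment is exactly $2$, which is consistent with the claim.)

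For the two particular cases, the maximal value $k+1$ is attained exactly when every $\partial F_k^i(n-1)$ with $i\in[1,k-1]$ equals $1$. This is the second claim, and it follows directly from the displayed formula.

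For the first claim, suppose $F_k(n-1)=F_k(n)$. Then for every $i\ge 1$ we have
\[
F_k^i(n)=F_k^{i-1}(F_k(n))=F_k^{i-1}(F_k(n-1))=F_k^i(n-1),
\]
so all terms $\partial F_k^i(n-1)$ with $i\ge1$ vanish. The increment is then exactly $2$.

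There is no real obstacle here. The only points needing care are checking that $n\ge1$ makes the $I_k$ formula of Lemma~\ref{lemkey} valid at both $n$ and $n+1$, and isolating the constant term $i=0$.
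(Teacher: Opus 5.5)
Your proposal is correct and is essentially the paper's argument: subtract the instances of Lemma~\ref{lemkey} with $j=k$ at $n+1$ and $n$, isolate the $i=0$ term $\partial F_k^0(n-1)=1$, and bound the remaining $k-1$ increments by Proposition~\ref{p:basicF}(e). You also handle the two particular cases the same way: a flat step of $F_k$ at $n-1$ propagates to all iterates, and all-ones increments give $k+1$.
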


\section{Direct proof of non-strict ordering}
\label{shortweak}
The Lemma~\ref{lemkey} is a powerful tool: from the knowledge of $F_k$
and its iterates at a given point, we deduce $I_k^k$ at the next
point, which in turn controls the behavior of
$F_k$ at a greater location. This allows us to simultaneously establish
$F_k \le F_{k+1}$ and $I_k^k \le I_{k+1}^{k+1}$ in the proof by
induction of
Theorem~\ref{t:weak} given below.

Note that using the least preimage $I_k^k$ instead of
the greatest preimage $L_k^k$
is mostly a matter of taste. The following proofs can be adapted to
$L_k^k$ as well. However, it seems crucial to exploit the $k$-th
iterate of a preimage function of $F_k$. For instance,
a similar proof based on $I_k$ instead of $I_k^k$ does not
seem possible. Indeed, comparing $I_k$ and $I_{k+1}$
is equivalent to comparing $F_k^{k-1}$ and $F_{k+1}^{k}$, which is not
directly related with the comparison of $F_k$ and $F_{k+1}$.

First, we show that $F_k(n) = n-m$ whenever $I_k^k(m) < n \le
I_k^k(m+1)$. More precisely, we split this property in two parts.

\begin{proposition}
\label{p:highlow}
For $k\ge 1$ and $n,m\ge 0$:
\begin{enumerate}[label=(\alph*)]
\itemsep.5ex
\item if $I_k^k(m) < n$ then $F_k(n) \le n-m$;
\item if $n \le I_k^k(m+1)$ then $n-m \le F_k(n)$.
\end{enumerate}
\end{proposition}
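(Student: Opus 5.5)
Both parts follow from the defining equation $F_k(n) = n - F_k^k(n-1)$ (valid for $n\ge 1$) combined with the Galois connection of Corollary~\ref{galoisI} between $I_k^k$ and $F_k^k$. The idea is to turn each inequality on $F_k(n)$ into an inequality on $F_k^k(n-1)$, and then to trade that for an inequality on $I_k^k$. Lemma~\ref{lemkey} is not needed here; it only becomes relevant later, when $I_k^k$ is compared with $I_{k+1}^{k+1}$.

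For (a), first note that $n=0$ is impossible, since $I_k^k(m) < 0$ cannot hold in $\mathbb{N}$. So $n\ge 1$, and the hypothesis $I_k^k(m) < n$ reads $I_k^k(m) \le n-1$. Corollary~\ref{galoisI} applied with $j=k$ then gives $m \le F_k^k(n-1)$. Therefore
\[
F_k(n) = n - F_k^k(n-1) \le n - m .
\]
Since $F_k^k(n-1)\le n-1 < n$, the subtraction $n - F_k^k(n-1)$ is a genuine (non-truncated) one, so this step is valid.

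For (b), the case $n=0$ is trivial: then $n-m\le 0 = F_k(0)$, whether subtraction is read in $\mathbb{Z}$ or as truncated subtraction on $\mathbb{N}$. For $n\ge 1$, the hypothesis $n\le I_k^k(m+1)$ means $n-1 < I_k^k(m+1)$. The strict form of Corollary~\ref{galoisI} (namely $p < I_k^k(q)$ iff $F_k^k(p) < q$) then yields $F_k^k(n-1) < m+1$, that is, $F_k^k(n-1)\le m$. Hence
\[
F_k(n) = n - F_k^k(n-1) \ge n - m .
\]

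The argument is short and contains no real obstacle. The only care needed is with the boundary case $n=0$, where the recursive equation for $F_k$ does not apply, and with the reading of $n-m$ when $m>n$, where the conclusion of (b) becomes vacuous under either convention.
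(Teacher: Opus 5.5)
Your proof is correct and follows the paper's argument. Part (a) is identical, and in part (b) you apply the strict form of Corollary~\ref{galoisI} where the paper rewrites $I_k^k(m+1)=L_k^k(m)+1$ and invokes Corollary~\ref{galoisL}, which amounts to the same thing; your explicit treatment of $n=0$ and of truncated subtraction is a small extra care the paper leaves implicit.
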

\begin{proof}
\begin{enumerate}[label=(\alph*)]
\itemsep.5ex
\item Assume $I_k^k(m) < n$, or equivalently $I_k^k(m) \le n-1$.
  By Corollary~\ref{galoisI}, this is also equivalent to
  $m \le F_k^k(n-1) = n - F_k(n)$.
\item Assume $n \le I_k^k(m+1)$, or equivalently $n \le L_k^k(m)+1$,
 i.e., $n-1 \le L_k^k(m)$. By Corollary~\ref{galoisL}, this is also
 equivalent to $F_k^k(n-1) \le m$. Hence $n-F_k(n) \le m$.
\end{enumerate}
\end{proof}

Then we compare $I_k^k$ and $I_{k+1}^{k+1}$, assuming that
$F_k$ and $F_{k+1}$ (and their iterates) are well ordered just before.

\begin{lemma}
\label{lem:F2I}
For $k \ge 1$ and $n\ge 0$, if $F_k^i(n) \le F_{k+1}^i(n)$ for all
$i\ge 0$
then $I_k^k(n+1) \le I_{k+1}^{k+1}(n+1)$.
\end{lemma}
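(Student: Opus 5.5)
Apply Lemma~\ref{lemkey} at the point $n+1>0$ with $j=k$ for $F_k$ and with $j=k+1$ for $F_{k+1}$:
\[
I_k^k(n+1) = n+1 + \sum_{i=0}^{k-1} F_k^i(n), \qquad
I_{k+1}^{k+1}(n+1) = n+1 + \sum_{i=0}^{k} F_{k+1}^i(n).
\]
The claim then reduces to comparing the two sums of iterates taken at the same point $n$, which is exactly the quantity the hypothesis controls.

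The comparison is termwise. For each $i\in[0,k-1]$ the hypothesis gives $F_k^i(n)\le F_{k+1}^i(n)$. The right-hand sum contains one extra term, $F_{k+1}^k(n)$, and this term is a natural number, hence nonnegative. Adding it can only increase the right side, so
\[
\sum_{i=0}^{k-1} F_k^i(n) \le \sum_{i=0}^{k-1} F_{k+1}^i(n) \le \sum_{i=0}^{k} F_{k+1}^i(n),
\]
which yields $I_k^k(n+1)\le I_{k+1}^{k+1}(n+1)$.

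There is no real obstacle in this argument. The only point to check is that Lemma~\ref{lemkey} applies in both cases. For $F_k$ we use $j=k$, which satisfies $j\le k$. For $F_{k+1}$ we use $j=k+1$, which satisfies $j\le k+1$. The lemma is stated for arguments $>0$, and here the argument is $n+1$. In both applications the iterate indices run from $0$ up to the top, so each formula is the full sum starting at $F^0=\mathrm{id}$. The hypothesis is needed only for $i\in[0,k-1]$, and the case $i=0$ holds trivially. When $n=0$ every term of both sums is $0$, so both sides equal $1$ and the inequality holds there too.
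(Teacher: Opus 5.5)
Your proposal is correct and matches the paper's proof: both apply Lemma~\ref{lemkey} at $n+1$ with $j=k$ and $j=k+1$, compare the sums termwise using the hypothesis, and note that the extra term $F_{k+1}^k(n)$ is non-negative. Your separate check of the case $n=0$ is harmless but unnecessary, since the general argument already covers it.
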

\begin{proof}
By Lemma~\ref{lemkey} we obtain
\begin{align*}
I_{k+1}^{k+1}(n+1) - I_k^k(n+1)
 &= n+1 + \sum_{i=0}^{k} F_{k+1}^i(n) - (n+1) - \sum_{i=0}^{k-1} F_k^i(n) \\
 &= F_{k+1}^{k}(n) + \sum_{i=0}^{k-1} (F_{k+1}^i(n) - F_k^i(n)).
\end{align*}
All the differences in the sum are non-negative thanks to the
hypothesis. And of course $F_{k+1}^{k}(n)$ is
also non-negative. So $I_k^k(n+1) \le I_{k+1}^{k+1}(n+1)$ as desired.
\end{proof}

Now we are equipped to prove the Theorem~\ref{t:weak} in a direct way.

\begin{proof}[Proof of Theorem~\ref{t:weak}]
Let $k\ge 1$. We prove $F_k(n) \le F_{k+1}(n)$ by strong induction on $n\ge 0$.
First, $F_k(0) \le F_{k+1}(0)$ since both are $0$.
Now let $n\ge 1$ and assume that $F_k \le F_{k+1}$ everywhere in
$[0,n-1]$.
This induction hypothesis extends
to iterates of these functions: for all $i\ge 0$, we have
$F_k^i \le F_{k+1}^i$ on $[0,n-1]$.
This is proved by a routine induction on $i$ where
all the functions involved are
monotonically increasing and smaller than the identity
hence preserve the interval $[0,n-1]$.

We focus on the value $m = F_k^k(n-1)$.
First this $m$ is in
$[0,n-1]$ since $F_k^k$ is smaller than the identity. We hence have
$F_k^i(m) \le F_{k+1}^i(m)$ for all $i\ge 0$.
By Lemma~\ref{lem:F2I} we deduce that $I_k^k(m+1) \le I_{k+1}^{k+1}(m+1)$.

Furthermore, the chosen value of $m$ ensures that
$I_k^k(m) < n \le I_k^k(m+1)$
Indeed, we have $m \le F_k^k(n-1) < m+1$.
Hence by Corollary~\ref{galoisI} we have
$I_k^k(m) \le n-1 < I_k^k(m+1)$.

Finally $F_k(n) \le n - m \le F_{k+1}(n)$. Indeed the left side
is a direct use of Proposition~\ref{p:highlow}(a). And the right side
comes from $n \le I_k^k(m+1) \le I_{k+1}^{k+1}(m+1)$ combined
with Proposition~\ref{p:highlow}(b). This concludes the proof by induction.
\end{proof}

In the previous proof, note that the induction hypothesis is used
only on $m$ or less. This point $m$ is also equal to $n-F_k(n)$
(so in fact $F_k(n) = n - m$ instead of the mere inequality used
above). This $m$ coincides with $n-1$ for
$n=1,2$ but becomes quite smaller than $n-1$ afterwards.
For instance, we clearly have $m\le n-2$ when $n>2$.
Moreover, this $m$ could have being chosen as
$n-F_{k+1}(n) = F_{k+1}^{k+1}(n-1)$ instead of $F_k^k(n-1)$ with only minor
adaptations to the rest of the proof. Despite being smaller, this
alternative choice does not improve the proof. This situation occurs
again in Section~\ref{sub:strictorder}, but there the smaller alternative
is actually more cumbersome that using $F_k^k(n-1)$.

From Theorem~\ref{t:weak}, we retrieve some immediate consequences
(also see Theorem~7.4 and Equation~(7.3) in
\cite{LetouzeyShuoSteiner2026}).

\begin{corollary}
\label{c:fsgrows}
For $k\ge 1$ and $n,i\ge 0$, we have $F_k^i(n) \le F_{k+1}^i(n)$.
\end{corollary}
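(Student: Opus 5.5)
The plan is an induction on $i$, with $k$ and $n$ arbitrary, using only Theorem~\ref{t:weak} and the monotonicity of the iterates of $F_{k+1}$ (Proposition~\ref{p:basicF}(f)). The base case $i=0$ is trivial, since both sides equal $n$.

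For the inductive step, assume $F_k^i(n) \le F_{k+1}^i(n)$ for all $n\ge 0$. We write
\[
F_k^{i+1}(n) = F_k(F_k^i(n)) \le F_{k+1}(F_k^i(n)) \le F_{k+1}(F_{k+1}^i(n)) = F_{k+1}^{i+1}(n).
\]
The first inequality is Theorem~\ref{t:weak} applied at the point $F_k^i(n)$. The second inequality is the induction hypothesis, pushed through the monotonically increasing function $F_{k+1}$.

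Since the induction hypothesis is needed at every point $n$, it should be phrased uniformly in $n$ before inducting on $i$. There is no real obstacle here: the only care needed is to apply Theorem~\ref{t:weak} at the inner argument and monotonicity at the outer one, which is exactly the order used in the display above. The same argument is the ``routine induction on $i$'' already used inside the proof of Theorem~\ref{t:weak}, now without the restriction to $[0,n-1]$.
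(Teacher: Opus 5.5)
Your proof is correct and follows the paper's argument: the paper proves this corollary by the same routine induction on $i$ used at the start of the proof of Theorem~\ref{t:weak}, combining Theorem~\ref{t:weak} with the monotonicity of the iterates. Your chain $F_k^{i+1}(n) = F_k(F_k^i(n)) \le F_{k+1}(F_k^i(n)) \le F_{k+1}^{i+1}(n)$ supplies exactly that step, with the induction hypothesis correctly quantified over all $n$.
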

\begin{proof}
By induction on $i$, just as in the beginning of the previous proof.
\end{proof}

\begin{corollary}
For $k\ge 1$ and $n\ge 0$, we have $I_k^k(n) \le I_{k+1}^{k+1}(n)$
as well as $L_k^k(n) \le L_{k+1}^{k+1}(n)$.
\end{corollary}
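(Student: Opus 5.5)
For the inequality on $I$, I would reuse the computation from the proof of Lemma~\ref{lem:F2I}, now valid at every point. For $n\ge 1$, Lemma~\ref{lemkey} with $j=k$ for $F_k$ and $j=k+1$ for $F_{k+1}$ gives
\[
I_{k+1}^{k+1}(n) - I_k^k(n) = F_{k+1}^{k}(n-1) + \sum_{i=0}^{k-1} \bigl(F_{k+1}^i(n-1) - F_k^i(n-1)\bigr).
\]
Every difference in the sum is non-negative by Corollary~\ref{c:fsgrows}, applied at the point $n-1$. The extra term $F_{k+1}^k(n-1)$ is a natural number, hence non-negative. So $I_k^k(n)\le I_{k+1}^{k+1}(n)$ for $n\ge1$. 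The case $n=0$ is trivial, since both sides are $0$ by Definition~\ref{defIL}. Equivalently, one can just invoke Lemma~\ref{lem:F2I} at $n-1$, because its hypothesis now holds everywhere by Corollary~\ref{c:fsgrows}.

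For the inequality on $L$, the same argument works with the second formula of Lemma~\ref{lemkey}, evaluated directly at $n$:
\[
L_{k+1}^{k+1}(n) - L_k^k(n) = F_{k+1}^{k}(n) + \sum_{i=0}^{k-1} \bigl(F_{k+1}^i(n) - F_k^i(n)\bigr) \ge 0,
\]
again using Corollary~\ref{c:fsgrows}. Alternatively, one can use the identity $I_k^j(m)=L_k^j(m-1)+1$ for $m>0$, stated after Definition~\ref{defIL}. Applying it at $m=n+1$ for both $k$ and $k+1$ transfers the $I$ inequality at $n+1$ directly to the $L$ inequality at $n$.

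I expect no real obstacle here. The only point to check is that Lemma~\ref{lemkey} may be used with $j=k+1$ for the function $F_{k+1}$; this holds because the constraint $j\le k'$ is satisfied with $k'=k+1$.
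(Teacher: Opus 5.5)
Your proof is correct and follows the paper's route: treat $n=0$ separately, apply Lemma~\ref{lem:F2I} at $n-1$ with its hypothesis supplied by Corollary~\ref{c:fsgrows}, and transfer to $L$ via $L_k^k(n)=I_k^k(n+1)-1$. Your direct use of the second formula of Lemma~\ref{lemkey} for the $L$ inequality is an equally valid alternative.
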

\begin{proof}
The case $n = 0$ is clear. Otherwise we use the Lemma~\ref{lem:F2I},
whose precondition is proved by Corollary~\ref{c:fsgrows}.
Finally we recall that $L_k^k(n) = I_k^k(n+1)-1$ and similarly for
$L_{k+1}^{k+1}$.
\end{proof}

\section{Strict ordering of \texorpdfstring{$F_k$ and $F_{k+1}$}{F}}
\label{strictorder}

This Section is dedicated to the proof of Theorem~\ref{t:strict}.
After studying the initial values of $F_k$ and $I_k^k$
in Section~\ref{s:aux}, we provide an ad-hoc early comparison
 of $I_k^k$ and $I_{k+1}^{k+1}$ in Section~\ref{s:bootstrap}.
This allows us to prove Theorem~\ref{t:strict} by induction in
Section~\ref{sub:strictorder}.

\subsection{Initial behavior \texorpdfstring{of $F_k$ and $I_k^k$}{}}
\label{s:aux}

Now we establish a few elementary properties of the
initial values of $F_k$ and $I_k^k$ that are needed in the rest of
this Section.

For now on, we abbreviate $T_p$ for the $p$-th
triangular number $\tfrac{1}{2}p(p+1)$. Note that the value
$N_k = \tfrac{1}{2}(k+1)(k+6)$ introduced earlier can also be
expressed as $T_{k+3}-3$.

First, the function $F_k$ admits an initial zone $[2,k+2]$ where it
satisfies $F_k(n) = n-1$. We call it the ``linear'' initial zone.
More generally, we express $F_k^j$ and $I_k^k$ in this zone.

\begin{proposition}
\label{p:FIinit}
Let $k\ge 1$.
\begin{enumerate}[label=(\alph*)]
\item For $n\in [1,k+2]$ and $j\ge 0$, we have $F_k^j(n) = \max(1, n - j)$;
\item for $n\in [2,k+2]$ we have $I_k^k(n) = T_{n-1} + k+1$.
\end{enumerate}
\end{proposition}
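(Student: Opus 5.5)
For part (a), I will first establish the case $j=1$, namely $F_k(n)=n-1$ for $n\in[2,k+2]$ (together with $F_k(1)=1$), by strong induction on $n$. For $n\in[2,k+2]$ the definition gives $F_k(n)=n-F_k^k(n-1)$, with $n-1\in[1,k+1]$. By the induction hypothesis, $F_k$ acts on $[1,n-1]$ as $x\mapsto\max(1,x-1)$, and this map preserves $[1,n-1]$. Hence $F_k^k(n-1)=\max(1,n-1-k)$. Since $n-1\le k+1$, we have $n-1-k\le 1$, so this equals $1$ and $F_k(n)=n-1$. The general $j$ then follows by a routine induction on $j$. On $[1,k+2]$, $F_k$ coincides with $x\mapsto\max(1,x-1)$ and keeps values inside $[1,k+2]$. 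Moreover $\max(1,\max(1,n-j)-1)=\max(1,n-j-1)$. (The case $j=0$ is trivial since $n\ge1$.)

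For part (b), I will use Lemma~\ref{lemkey} with $j=k$. For $n\ge 2$ (so $n>0$) it gives
\[
I_k^k(n)=n+\sum_{i=0}^{k-1}F_k^i(n-1).
\]
Since $n-1\in[1,k+1]$, part (a) yields $F_k^i(n-1)=\max(1,n-1-i)$. For $i\in[0,n-2]$ these terms are $n-1,n-2,\ldots,1$, summing to $T_{n-1}$. This requires $n-2\le k-1$, i.e.\ $n\le k+1$. For the remaining indices $i\in[n-1,k-1]$ each term equals $1$, and there are $k-n+1$ of them. The total is therefore $I_k^k(n)=n+T_{n-1}+(k-n+1)=T_{n-1}+k+1$.

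The one boundary case needing care is $n=k+2$. There the indices $i\in[0,k-1]$ give terms $k+1,k,\ldots,2$, with no terms equal to $1$. Their sum is $T_{k+1}-1$, so $I_k^k(k+2)=k+2+T_{k+1}-1=T_{k+1}+k+1$, consistent with the formula. I expect no real obstacle; the only delicate points are the bookkeeping of this boundary case and checking that $F_k$ preserves the interval $[1,k+2]$ during the iteration in (a).
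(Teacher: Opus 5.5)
Your proof is correct and follows essentially the paper's route: for (a), an induction on $n$ through the recursive definition, using $F_k^k(n-1)=\max(1,n-1-k)=1$; for (b), Lemma~\ref{lemkey} with $j=k$ and the same split of the sum when $n\le k+1$. The only divergence is the boundary case $n=k+2$, where you compute the sum $(k+1)+k+\cdots+2=T_{k+1}-1$ directly, whereas the paper adds the increment $k+1$ from Corollary~\ref{c:dIkk} to $I_k^k(k+1)=T_{k+1}$; both are valid, and yours avoids that extra dependency.
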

\begin{proof}
\begin{enumerate}[label=(\alph*)]
\item
By induction on $n\in [1,k+2]$.
The base case is clear: $F_k^i(1) = 1 = \max(1,1-j)$.
Now let $n\in [2,k+2]$ and assume $F_k^i(n-1) = \max(1,n-1-i)$ for all $i$.
We prove $F_k^i(n) = \max(1,n-i)$ for all $i$.
The case $i=0$ is obvious. Now we consider $i=1$.
In that case, $F_k(n) = n-F_k^k(n-1) = n-\max(1,n-1-k)=n-1 = \max(1,n-1)$
since $n\le k+2$ here.
Finally, for $i>1$,
 $F_k^i(n) = F_k^{i-1}(F_k(n)) = F_k^{i-1}(n-1) = \max(1,n-1-(i-1))$.
This ends the current induction.
\item
For $n\in[2,k+2]$, we use Lemma~\ref{lemkey} and the previous point.
This gives
\[
I_{k}^{k}(n) = n + \sum_{i=0}^{k-1} F_k^i(n-1)
             = n + \sum_{i=0}^{k-1} \max(1,n-1-i).
\]
When $n< k+2$, this last sum can be split in the following way.
\begin{align*}
I_{k}^{k}(n) &= n + \sum_{i=0}^{n-2} (n-1-i) + \sum_{i=n-1}^{k-1} 1\\
             &= n + T_{n-1} + ((k-1)-(n-1)+1) \\
             &= k+1 + T_{n-1}.
\end{align*}
Instead of adapting this computation to the case $n=k+2$,
we use Corollary~\ref{c:dIkk} in this case, with an increment of $k+1$
compared to $I_k^k(k+1)$. Indeed $\partial F_k^i(k) = 1$
for all $i\in[1,k-1]$ thanks to point (a).
Since the previous computation provides $I_k^k(k+1) = k+1+T_{k} = T_{k+1}$,
we obtain the desired equality $I_k^k(k+2) = T_{k+1}+k+1$.
\end{enumerate}
\end{proof}

The linear zone we just considered is followed by a ``triangular''
zone $[k+3, N_k]$ where
the behavior of $F_k$ is governed by triangular numbers. First we
express a few particular cases at the left of this triangular zone.

\begin{proposition}
\label{p:FIkp}
For $k\ge 1$ we have the following particular cases.
\begin{align*}
  F_k(k+3) &= k+1; \\
  F_k(k+4) &= k+2; \\
  I_k^k(k+3) &= T_{k+3} - 3 = N_k; \\
  I_k^k(k+4) &= T_{k+3} - 1 = 2 + N_k; \\
  I_k^k(k+5) &= T_{k+4} - 4 = N_{k+1} - 1.
\end{align*}
\end{proposition}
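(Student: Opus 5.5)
The plan is a direct computation. We evaluate $F_k$ at $k+3$ and $k+4$ from its defining recursion, and evaluate $I_k^k$ at $k+3,k+4,k+5$ using Lemma~\ref{lemkey} with $j=k$. Every iterate that appears can be reduced to iterates of $F_k$ at points of the linear zone $[1,k+2]$, where Proposition~\ref{p:FIinit}(a) gives the closed form $F_k^j(n)=\max(1,n-j)$.

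First, $F_k(k+3)=k+3-F_k^k(k+2)=k+3-\max(1,2)=k+1$. Then
\[
F_k(k+4)=k+4-F_k^k(k+3)=k+4-F_k^{k-1}(k+1),
\]
where we peel off one application of $F_k$ using the value just obtained. This gives $F_k(k+4)=k+4-\max(1,2)=k+2$. The case $k=1$ needs no special treatment, since $F_1^0$ is the identity.

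For the $I_k^k$ values we use $I_k^k(n)=n+\sum_{i=0}^{k-1}F_k^i(n-1)$ in three cases.
\begin{itemize}
\item At $n=k+3$, all terms $F_k^i(k+2)=k+2-i$ lie in the linear zone. The sum is $\sum_{m=3}^{k+2}m=T_{k+2}-3$, so the total is $k+3+T_{k+2}-3=T_{k+3}-3=N_k$.
\item At $n=k+4$, the $i=0$ term is $k+3$. For $i\ge1$ we have $F_k^i(k+3)=F_k^{i-1}(k+1)=k+2-i$, and these terms sum to $T_{k+1}-3$. The total is $2k+4+T_{k+1}-3=T_{k+3}-1$, using $T_{k+3}=T_{k+1}+2k+5$.
\item At $n=k+5$, the $i=0$ term is $k+4$. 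For $i\ge1$ we have $F_k^i(k+4)=F_k^{i-1}(k+2)=k+3-i$, and these terms sum to $T_{k+2}-6$. The total is $2k+9+T_{k+2}-6=T_{k+4}-4=N_{k+1}-1$.
\end{itemize}

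The main care point is bookkeeping rather than any conceptual difficulty. We must check that every argument fed to Proposition~\ref{p:FIinit}(a) lies in $[1,k+2]$; here the arguments are $k+1$ and $k+2$. We must also check that the index $i$ never makes $n-1-i$ drop below $2$, so that each $\max$ equals its linear branch; the smallest value reached is $3$. Finally, the summation bounds must be correct in the degenerate case $k=1$, where the sums over $i\ge1$ are empty and the formulas still hold.
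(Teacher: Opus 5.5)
Your computation is correct and is essentially the paper's argument: both rest on Lemma~\ref{lemkey} and the linear-zone formula of Proposition~\ref{p:FIinit}(a). The only differences are that the paper telescopes from $I_k^k(k+2)$ using the increments of Corollary~\ref{c:dIkk}, and obtains $F_k(k+4)$ from $\partial F_k(k+3)=1-\partial F_k^k(k+2)=1$, whereas you evaluate the full sums directly and peel off one application of $F_k$.

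One arithmetic slip needs fixing in your second bullet. The total there is $(k+4)+(k+3)+T_{k+1}-3=2k+4+T_{k+1}=T_{k+3}-1$, but the expression $2k+4+T_{k+1}-3$ you actually wrote equals $T_{k+3}-4$.
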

\begin{proof}
We compute $F_k(k+3) = k+3-F_k^k(k+2)=k+3-\max(1,k+2-k)=k+1$
by Proposition~\ref{p:FIinit}(a).
In particular, this means that $F_k(k+3) = F_k(k+2)$ and
more generally $F_k^i(k+3) = F_k^i(k+2)$ for all $i>0$.
Then we exploit the fact that $F_k$ cannot be flat twice in a row.
More precisely, we have
$\partial F_k(k+3) = 1 - \partial F_k^k(k+2) = 1$
by Proposition~\ref{p:basicF}, so $F_k(k+4) = k+2$.

For the equations concerning $I_k^k$, we proceed once again by
incremental differences thanks to Corollary~\ref{c:dIkk}.
\begin{itemize}
\item We established $I_k^k(k+2) = T_{k+1} +k+1$ in the previous Proposition.
\item $I_k^k(k+3) = I_k^k(k+2) + k+1$. Indeed, by
  Proposition~\ref{p:FIinit}(a) we have $\partial F_k^i(k+1) =1$
  for all $i\in[1,k-1]$
\item $I_k^k(k+4) = I_k^k(k+3) + 2$. Indeed $F_k(k+3) = k+1 = F_k(k+2)$.
\item $I_k^k(k+5) = I_k^k(k+4) + k+1$.
 Indeed for $i\in[1,k-1]$ we have $F_k^i(k+4) = F_k^{i-1}(k+2) = k+2-(i-1)$
 and similarly $F_k^i(k+3) = F_k^{i-1}(k+1) = k+1-(i-1)$.
\end{itemize}
From there, the desired equations are obtained by routine reasoning
on triangular numbers.
\end{proof}

We recall
that the triangular numbers form a strictly increasing sequence
starting at $T_0=0$, so for all numbers $m\ge 0$, there is a unique $p$ such
that $T_p \le m < T_{p+1}$. We call this $p$ the ``triangular index'' of $m$.
In the interval $[k+3,N_k]$, the
value of $F_k(n)$ is governed by the triangular index of $n-k-2$.

\begin{proposition}
\label{p:Ftri}
Let $k\ge 1$ and $n\in[k+2,N_k]$.
\begin{enumerate}[label=(\alph*)]
\item For all $p$ such that $T_p \le n-k-2$, we have $F_k(n) \le n-p-1$;
\item for all $p$ such that $n-k-2 < T_{p+1}$, we have $F_k(n) \ge
  n-p-1$;
\item consequently $F_k(n) = n-p-1$ where $p$ is the triangular index
  of $n-k-2$.
\end{enumerate}
\end{proposition}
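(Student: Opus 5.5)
The plan is to derive both inequalities from Proposition~\ref{p:highlow}, using the explicit values of $I_k^k$ on the linear zone from Proposition~\ref{p:FIinit}(b), namely $I_k^k(q) = T_{q-1}+k+1$ for $q\in[2,k+2]$. Throughout we take $m=p+1$ in Proposition~\ref{p:highlow}, since the target value $n-p-1$ is exactly $n-m$.

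For (a), let $p$ satisfy $T_p \le n-k-2$.
\begin{itemize}
\item If $p=0$, the claim $F_k(n)\le n-1$ follows from Proposition~\ref{p:basicF}(c), because $n\ge k+2\ge 2$.
\item If $p\ge 1$, we first check that $p\le k+1$. Since $n\le N_k = T_{k+3}-3$, we get $n-k-2 \le T_{k+3}-k-5 = T_{k+2}-2 < T_{k+2}$, so $T_p < T_{k+2}$ and hence $p\le k+1$.
\item Therefore $p+1\in[2,k+2]$, and Proposition~\ref{p:FIinit}(b) gives $I_k^k(p+1) = T_p+k+1 \le n-1 < n$.
\item Proposition~\ref{p:highlow}(a) with $m=p+1$ then yields $F_k(n)\le n-p-1$.
\end{itemize}

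For (b), let $p$ satisfy $n-k-2 < T_{p+1}$.
\begin{itemize}
\item If $p\le k$, then $p+2\in[2,k+2]$, and Proposition~\ref{p:FIinit}(b) gives $I_k^k(p+2)=T_{p+1}+k+1$. The hypothesis $n-k-2<T_{p+1}$ means $n\le T_{p+1}+k+1 = I_k^k(p+2)$.
\item Proposition~\ref{p:highlow}(b) with $m=p+1$ then gives $F_k(n)\ge n-p-1$.
\item If $p\ge k+1$, the linear-zone formula is no longer available. Instead, use Proposition~\ref{p:FIkp}: $n\le N_k = I_k^k(k+3)$. Proposition~\ref{p:highlow}(b) with $m=k+2$ gives $F_k(n)\ge n-k-2 \ge n-p-1$.
\end{itemize}

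Point (c) is immediate. The triangular index $p$ of $n-k-2$ satisfies both $T_p\le n-k-2$ and $n-k-2<T_{p+1}$, so (a) and (b) combine to give $F_k(n)=n-p-1$.

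The only delicate step is the range bookkeeping: making sure each $I_k^k$ argument lies in $[2,k+2]$ where Proposition~\ref{p:FIinit}(b) applies. This is handled by two things: the computation $N_k-k-2 < T_{k+2}$ for (a), and the fallback to $I_k^k(k+3)=N_k$ from Proposition~\ref{p:FIkp} when $p$ is large in (b).
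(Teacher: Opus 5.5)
Your proof is correct and follows the paper's argument essentially step by step. Both use Proposition~\ref{p:highlow} with $m=p+1$ together with the linear-zone values $I_k^k(q)=T_{q-1}+k+1$, the same bound $N_k-k-2<T_{k+2}$ in (a), and the fallback $I_k^k(k+3)=N_k$ in (b). The only cosmetic difference is in the case $p\ge k+1$: you apply Proposition~\ref{p:highlow}(b) at $m=k+2$ and then use $n-k-2\ge n-p-1$, whereas the paper keeps $m=p+1$ and invokes the monotonicity $I_k^k(p+2)\ge I_k^k(k+3)$.
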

\begin{proof}
Let $k\ge 1$ and $n\in[k+2,N_k]$.
\begin{enumerate}[label=(\alph*)]
\item Let $p$ such that $T_p \le n-k-2$.
If $p = 0$, we indeed have $F_k(n) \le n-1$ for $n\ge 2$ by
Proposition~\ref{p:basicF}(c). Now we assume $p>0$.
By Proposition~\ref{p:highlow}(a), to prove that $F_k(n) \le n - p -1$,
it is actually sufficient to establish $I_k^k(p+1) < n$.
Note that we have $p < k+2$ since
$T_p \le n-k-2 \le N_k - k - 2 = T_{k+3}-k-5 < T_{k+2}$.
Hence we are entitled to use Proposition~\ref{p:FIinit}(b) and obtain
$I_k^k(p+1) = T_{p}+k+1 < n$.
\item Let $p$ such that $n-k-2 < T_{p+1}$.
By Proposition~\ref{p:highlow}(b), to prove that $F_k(n) \ge n - p -1$,
it is actually sufficient to establish $n \le I_k^k(p+2)$.
If $p \le k$, we can use Proposition~\ref{p:FIinit}(b) to obtain
$I_k^k(p+2) = T_{p+1}+k+1 \ge n$.
Otherwise $p \ge k+1$ and in this case we have
$I_k^k(p+2) \ge I_k^k(k+3) = N_k \ge n$ by Propositions~\ref{p:IL}(a)
and~\ref{p:FIkp}.
\item Direct combination of the two previous points.
\end{enumerate}
\end{proof}

As a side note, two other descriptions of this triangular zone are
worth mentioning.
First, Hofstadter~\cite{GEB} noticed that $F_k$ can be presented
thanks to an infinite tree obtained from a pattern specific to $k$,
and then labeled by successive numbers in a breadth-first way.
For a node in this tree, the function $F_k$ transforms its label
into the label of its parent node. On this topic, also see
Celaya and Ruskey~\cite{CelayaRuskey2013} and our earlier
work~\cite{Letouzey15}.  Here, the triangular zone
corresponds to the initial part of the tree where just one binary node
occurs per row.

Secondly, $F_k$ is deeply related with a non-standard numeration
system that we describe in a separate work~\cite{Letouzey2025},
following Meek and Van Rees~\cite{MeekVanRees84}.
In this numeration system, the weights of the
digits are given by a generalization of the
Fibonacci sequence. For Fibonacci, we recursively add
the last two terms of the sequence to obtain the next one.
Here we recursively add $k$-apart terms in the sequence:
$A_{k,n} = A_{k,n-1} + A_{k,n-k}$ combined with the initial terms
$A_{k,n} = n+1$ when $n<k$. When a number $n$ is written in this
numeration system, the value of $F_k(n)$ is
obtained by shifting down the digits of $n$ by one position, with a particular
treatment of the least-significant digit. We recall that the
the triangular zone we just explored ends with $N_k = \tfrac{1}{2}(k+1)(k+6)$.
This $N_k$ happens to be the first number to
necessitate three non-zero digits in this numeration system.
All the other numbers in the linear and triangular zones can be
written with at most two non-zero digits. Thanks to this fact,
it is possible to obtain alternative proofs for our results about
the triangular zone, including the tedious Lemma~\ref{bootstrap}
given in the next Section. But these alternative proofs are not
significantly simpler than the ones presented here.
Finally, these interpretations of $F_k$ based on
trees or numeration systems do not seem to provide direct proofs of
Theorems~\ref{t:weak} nor~\ref{t:strict}.

From Proposition~\ref{p:Ftri},
it is actually possible to express the general iterates $F_k^j$ in the
triangular zone.
In particular $F_k^j$ is alternatively flat there (after triangular
numbers or immediately before) or growing several times (further away
from the previous triangular number). The exact details are rather
intricate. Fortunately, we can avoid them for the rest of this article.
Instead, we just propose
some simple lower bounds for $F_k^k$ and $F_k^{k-1}$ in the triangular zone.

\begin{proposition}
\label{p:Fkkm1}
Let $k\ge 1$ and $n\in[k+1,N_k-1]$.
For all $p$ such that $T_p \le n-k-1$, we have $F_k^k(n) \ge p+1$ and
$F_k^{k-1}(n) \ge p+2$.
\end{proposition}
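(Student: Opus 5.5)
The plan is to turn both inequalities into statements about least preimages, via the Galois connection of Corollary~\ref{galoisI}, and then evaluate those preimages with the explicit values from Proposition~\ref{p:FIinit}(b). Fix $k\ge 1$, $n\in[k+1,N_k-1]$ and $p$ with $T_p\le n-k-1$.

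\emph{Bound on $p$.} First I check that $p$ is small enough for Proposition~\ref{p:FIinit}(b) to apply at $p+1$. We have
\[
T_p \le n-k-1 \le N_k-k-2 = T_{k+3}-k-5 < T_{k+2}.
\]
Since the triangular numbers are strictly increasing, $p\le k+1$, so $p+1\in[1,k+2]$.

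\emph{First inequality.} By Corollary~\ref{galoisI}, $F_k^k(n)\ge p+1$ is equivalent to $I_k^k(p+1)\le n$.
\begin{itemize}
\item If $p=0$, then $I_k^k(1)=1\le n$ by Proposition~\ref{p:IL}(b).
\item If $p\ge1$, then $p+1\in[2,k+2]$. Proposition~\ref{p:FIinit}(b) gives $I_k^k(p+1)=T_p+k+1$, and the hypothesis gives $T_p+k+1\le n$.
\end{itemize}
So $F_k^k(n)\ge p+1$ in both cases.

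\emph{Second inequality.} Rather than computing $I_k^{k-1}$, I derive it from the first one. Let $m=F_k^{k-1}(n)$. Then $F_k(m)=F_k^k(n)\ge p+1$, so Corollary~\ref{galoisI} (with $j=1$) gives $m\ge I_k(p+1)$.
\begin{itemize}
\item If $p\ge1$, then $I_k(p+1)=p+1+F_k^{k-1}(p)\ge p+2$, because $F_k^{k-1}(p)\ge1$ for $p\ge1$ by Proposition~\ref{p:basicF}(b). This covers $k=1$ as well, where $F_k^{0}(p)=p\ge1$.
\item If $p=0$, we need $F_k^{k-1}(n)\ge 2$. Monotonicity and Proposition~\ref{p:FIinit}(a) give $F_k^{k-1}(n)\ge F_k^{k-1}(k+1)=\max(1,2)=2$.
\end{itemize}

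The only delicate points are the range check $p\le k+1$, which is what makes Proposition~\ref{p:FIinit}(b) applicable, and the separate treatment of the edge case $p=0$. Everything else is a direct application of the Galois connections.
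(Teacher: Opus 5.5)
Your proof is correct and is essentially the paper's argument. The paper gets the first bound from $F_k^k(n)=n+1-F_k(n+1)$ and Proposition~\ref{p:Ftri}(a), which is itself proved by your computation $I_k^k(p+1)=T_p+k+1\le n$ plus the Galois connection, so you have just unwrapped that step. For the second bound, the paper avoids your case split on $p$: it uses $F_k^{k-1}(n)\ge F_k^{k-1}(k+1)=2$ uniformly together with $F_k(x)<x$ for $x\ge 2$ (Proposition~\ref{p:basicF}(c)), which is the same fact as your $I_k(p+1)\ge p+2$ read through the Galois connection.
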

\begin{proof}
For the first inequality,
$F_k^k(n)$ is expressed as $n+1-F_k(n+1)$ while
Proposition~\ref{p:Ftri}(a) leads to $F_k(n+1)\le (n+1)-p-1$.

We obtain the second inequality by proving additionally that
$F_k^k(n) < F_k^{k-1}(n)$. This is ensured by
Proposition~\ref{p:basicF}(c) as soon as $F_k^{k-1}(n) \ge 2$.
But we have
$n \ge k+1$, and hence $F_k^{k-1}(n) \ge F_k^{k-1}(k+1)$ by monotonicity.
Finally $F_k^{k-1}(k+1) = \max(1,k+1-(k-1)) = 2$ by
Proposition~\ref{p:FIinit}(a).
\end{proof}

Finally, we describe the early differences between $F_{k+1}$ and $F_k$.

\begin{proposition}
\label{p:Fdifflow}
Let $k\ge 1$.
\begin{enumerate}[label=(\alph*)]
\item For $n \in [0,k+2]$, we have $F_{k+1}(n) = F_k(n)$.
\item For $n \in [k+3,N_k]$, we have $F_{k+1}(n) = F_k(n)+1$
when $n-k-2$ is a triangular number, otherwise $F_{k+1}(n) = F_k(n)$.
\item In particular
\begin{align*}
  F_{k+1}(k+3) &= F_k(k+3)+1 = k+2; \\
  F_{k+1}(k+4) &= F_k(k+4) = k+2; \\
  F_{k+1}(N_k) &= F_k(N_k) = N_{k-1}+1.
\end{align*}
\end{enumerate}
\end{proposition}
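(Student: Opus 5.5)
The plan is to apply the already-established descriptions of the linear and triangular zones to both $F_k$ and $F_{k+1}$, and compare them directly. For (a), Proposition~\ref{p:FIinit}(a) gives $F_k(n)=\max(1,n-1)$ for $n\in[1,k+2]$. The same proposition applied to $k+1$ gives $F_{k+1}(n)=\max(1,n-1)$ on the larger range $[1,k+3]$. Together with $F_k(0)=F_{k+1}(0)=0$, this settles (a).

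For (b), fix $n\in[k+3,N_k]$ and let $p$ be the triangular index of $n-k-2$. Proposition~\ref{p:Ftri}(c) gives $F_k(n)=n-p-1$.

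For $F_{k+1}$, I use Proposition~\ref{p:Ftri} at level $k+1$. This is legitimate as long as $n\in[k+3,N_{k+1}]$, which holds since $N_k<N_{k+1}$. It gives $F_{k+1}(n)=n-q-1$, where $q$ is the triangular index of $n-k-3=(n-k-2)-1$. So everything reduces to comparing the triangular indices of $m$ and $m-1$, where $m=n-k-2\ge 1$:
\begin{itemize}
\item if $m$ is not triangular, then $T_p<m<T_{p+1}$, so $m-1$ has the same index $p$, and $F_{k+1}(n)=F_k(n)$;
\item if $m=T_p$ with $p\ge1$, then $T_{p-1}\le m-1<T_p$, so $q=p-1$ and $F_{k+1}(n)=F_k(n)+1$.
\end{itemize}
The only subtlety is $m=T_0=0$, but that would mean $n=k+2$, which lies outside the range. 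This completes (b).

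For (c), I evaluate these formulas at three points.
\begin{itemize}
\item At $n=k+3$: $m=1=T_1$, so $F_k(k+3)=k+1$ by Proposition~\ref{p:FIkp}, and hence $F_{k+1}(k+3)=k+2$.
\item At $n=k+4$: $m=2$ is not triangular, so $F_{k+1}(k+4)=F_k(k+4)=k+2$.
\item At $n=N_k$: $m=N_k-k-2=T_{k+3}-k-5=T_{k+2}-2$. This is not triangular, because $T_{k+1}<T_{k+2}-2$ whenever $k+2\ge3$. Its triangular index is $p=k+1$, so
\[
F_k(N_k)=N_k-k-2=T_{k+3}-k-5=T_{k+2}-2.
\]
Since $T_{k+2}-2=N_{k-1}+1$ (both equal $T_{k+2}-3+1$), this is the claimed value, and $F_{k+1}(N_k)$ equals it as well.
\end{itemize}

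The main obstacle is modest. It consists of checking that Proposition~\ref{p:Ftri} really applies at level $k+1$ on the whole interval (including the lower endpoint $k+3\ge (k+1)+2$), and of the triangular-number bookkeeping at $N_k$.
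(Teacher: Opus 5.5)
Your proposal is correct and follows essentially the same route as the paper. You read off (a) from Proposition~\ref{p:FIinit}(a) at levels $k$ and $k+1$, and you get (b) by applying Proposition~\ref{p:Ftri} at both levels, reducing to a comparison of the triangular indices of $n-k-2$ and $n-k-3$. For (c) you evaluate at $k+3$, $k+4$ and $N_k$, using $N_k-k-2=T_{k+2}-2$ and $N_{k-1}+1=T_{k+2}-2$, exactly as the paper does.
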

\begin{proof}
Let $k\ge 1$.
\begin{enumerate}[label=(\alph*)]
\item We clearly have $F_{k+1}(0) = 0 = F_k(0)$ and
$F_{k+1}(1) = 1 = F_k(1)$, while
for $n\in [2,k+2]$ we have
$F_{k+1}(n) = n-1 = F_{k}(n)$ by Proposition~\ref{p:FIinit}(a).
\item
Let $n \in [k+3,N_k]$.
There exists $p$ such that $T_p \le n-k-2 < T_{p+1}$.
By Proposition~\ref{p:Ftri}, this means that $F_k(n) = n-p-1$.

If $n-k-2$ is exactly a triangular number, it can only be $T_p$.
Moreover $(k+1)+2 \le n \le N_{k+1}$ and $T_{p-1}\le n-(k+1)-2 < T_p$
so another instance of Proposition~\ref{p:Ftri} gives
$F_{k+1}(n) = n - (p-1)-1 = F_k(n)+1$.

Suppose now that $n-k-2$ is not a triangular number.
This implies $T_p < n-k-2$ and hence $T_p \le n-(k+1)-2 < T_{p+1}$,
so finally $F_{k+1}(n) = n - p - 1 = F_k(n)$.

\item For $n =k+3$, we have $n-k-2 = 1 = T_1$.
For $n=k+4$, we have $n-k-2 = 2$ which is not a triangular number.
Finally for $n=N_k$ we have $N_k-k-2 = T_{k+2}-2$. For $k\ge 1$, this
value cannot be a triangular number. The actual values for these
instances of $F_{k+1}$ and $F_k$ are obtained by
Propositions~\ref{p:FIkp} and~\ref{p:Ftri}.
\end{enumerate}
\end{proof}

\subsection{Ad-hoc early comparison \texorpdfstring{of $I_k^k$ and $I_{k+1}^{k+1}$}{}}
\label{s:bootstrap}

This section is dedicated to the comparison of $I_k^k$ and
$I_{k+1}^{k+1}$ in the triangular zone $[k+3,N_k]$.
More precisely, we establish the following lemma, which is essential
to ``bootstrap'' the proof by induction of Theorem~\ref{t:strict} given in
Section~\ref{sub:strictorder}.

\begin{lemma}
\label{bootstrap}
For $k\ge 1$ and $n \in [k+3, N_k]$,
we have $I_k^k(n+1) \le I_{k+1}^{k+1}(n)$.
\end{lemma}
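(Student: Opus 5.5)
The plan is to subtract the two formulas of Lemma~\ref{lemkey} and then bound the resulting difference term by term, using the explicit description of the linear and triangular zones from Section~\ref{s:aux}. Let $n\in[k+3,N_k]$. Lemma~\ref{lemkey} is applied at $n+1$ for $F_k$ with $j=k$, and at $n$ for $F_{k+1}$ with $j=k+1$. The $i=0$ terms cancel up to a constant, and we get
\[
D := I_{k+1}^{k+1}(n)-I_k^k(n+1) = -2 + F_{k+1}^{k}(n-1) + \sum_{i=1}^{k-1}\bigl(F_{k+1}^i(n-1)-F_k^i(n)\bigr).
\]
Each summand splits as $\bigl(F_{k+1}^i(n-1)-F_k^i(n-1)\bigr) - \partial F_k^i(n-1)$. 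The first part is non-negative by Corollary~\ref{c:fsgrows}; it is a possible gain that we may ignore or keep as needed. The second part is $0$ or $1$ by Proposition~\ref{p:basicF}. So it suffices to show
\[
F_{k}^{k}(n-1) \ \ge\ 2 + c(n), \qquad c(n) := \#\{\, i\in[1,k-1] : \partial F_k^i(n-1)=1 \,\},
\]
or a version of this where the lost slack is recovered from the gains $F_{k+1}^i(n-1)-F_k^i(n-1)$, which Proposition~\ref{p:Fdifflow}(b) locates precisely.

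Next comes a sanity check at the left end. At $n=k+3$, Propositions~\ref{p:FIinit} and~\ref{p:FIkp} give equality $I_k^k(k+4)=N_k+2=I_{k+1}^{k+1}(k+3)$. So the bound is tight there, and no slack may be wasted.

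For the main count, write $p$ for the triangular index of $n-k-2$. Proposition~\ref{p:Fkkm1} at the point $n-1\in[k+2,N_k-1]$ gives $F_k^k(n-1)\ge p+1$; via Theorem~\ref{t:weak} this also bounds $F_{k+1}^k(n-1)$. To control $c(n)$, I will use that $\partial F_k^i(n-1)=1$ for all $i\le r$ as long as the chain $x_0=n-1$, $x_{i}=F_k(x_{i-1})$ avoids flat steps. Moreover, once one step is flat, all further iterates are flat at $n-1$. By Proposition~\ref{p:Ftri}(c) and Proposition~\ref{p:FIinit}(a), a step from $x$ is flat exactly when $x+1-k-2$ is a triangular number (in the triangular zone), or when $x=k+2$, or when $x=1$. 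So $c(n)$ is the length of the initial run of increasing steps of this chain.

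I then split into two cases. If $n-k-2=T_p$ is triangular, the very first step is flat, so $c(n)=0$, and we only need $F_{k+1}^k(n-1)\ge 2$; this follows from $n-1\ge k+2$ and Proposition~\ref{p:FIinit}(a). Otherwise the chain descends through the triangular zone: each step $x\mapsto x-q-1$ lowers the triangular index $q$ of $x+1-k-2$ by at most one. I expect to show that the chain hits a flat point (a triangular offset, or the point $k+2$) after at most about $p$ steps. That would give $c(n)\le p-1$, which is exactly what $F_k^k(n-1)\ge p+1$ requires. Where this count is off by one, I would instead use the gain term at $i=1$: by Proposition~\ref{p:Fdifflow}(b), $F_{k+1}(n-1)=F_k(n-1)+1$ exactly when $n-1-k-2$ is triangular, i.e.\ precisely in the boundary situations where the naive count is tight.

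The main obstacle is this counting step: bounding the length of the increasing run of the chain $F_k^i(n-1)$ by the triangular index $p$, uniformly over the whole zone up to $N_k$. Near the right end, $n\approx N_k$, the chain starts deep in the triangular zone, and the interplay between the indices of successive $x_i$ has to be tracked carefully. If a clean invariant does not emerge, I would fall back on a finer case analysis by the position of $n-k-2$ inside $[T_p,T_{p+1})$. In that analysis I would use Proposition~\ref{p:Ftri} to compute $x_1=n-p-2$ explicitly and to check that its offset $x_1+1-k-2$ has triangular index $p-1$ or is triangular. That would give a recursion on $p$ proving $c(n)\le p-1$.
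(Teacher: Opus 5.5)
\textbf{There is a genuine gap: the counting claim is false.} Your formula for $D$ and your lower bound $F_{k+1}^{k}(n-1)\ge p+1$ are correct. The problem is the bound $c(n)\le p-1$, which fails by far more than the one unit the $i=1$ gain can repair. Your analysis of the chain $x_i=F_k^i(n-1)$ misses that it can \emph{skip} the flat point $k+2$. From $x=k+3$ it jumps to $F_k(k+3)=k+1$, and then runs down the linear zone $k+1,k,\dots,2$ with no flat step at all. This happens when $n-k-2=T_{p+1}-1$, and then $c(n)=k-1$ whatever $p$ is.

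Two concrete cases show the failure.
\begin{itemize}
\item Take $k=3$, $n=7$. Then $p=1$, but $H(6)=4<H(7)=5$ and $H(4)=3<H(5)=4$, so $c(7)=2$ and $F_3^3(6)=2<2+c(7)$. Even adding the $i=1$ gain $F_4(6)-F_3(6)=1$ and using $F_4^3(6)\ge p+1$, you end at $-1$.
\item Take $k=5$, $n=12$. Then $p=2$, and the chain $11,8,6,5,4$ has no flat step, so $c(12)=4$. Meanwhile $F_6(11)=F_5(11)=8$, so the $i=1$ gain is zero.
\end{itemize}
In general the deficit is $k-p$, which is unbounded in $k$. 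What actually pays for it is a gain $F_{k+1}^i(n-1)-F_k^i(n-1)=1$ at \emph{every} index $i\ge p$. Both chains reach $k+3$ at step $p-1$. After that the $F_{k+1}$-chain ($k+2,k+1,\dots$) stays one above the $F_k$-chain ($k+1,k,\dots$). Your plan never follows the $F_{k+1}$-chain, so the proposed recursion on $p$ cannot prove $c(n)\le p-1$. It already breaks at its base case $p=1$, $n=k+4$, as soon as $k\ge 3$.

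\textbf{The missing idea.} The paper does not split off $\partial F_k^i(n-1)$. It compares $F_{k+1}^i(n-1)$ with $F_k^i(n)$ directly and uses persistence.
\begin{itemize}
\item Once $F_k^q(n)\le F_{k+1}^q(n-1)$ holds for some $q$, it holds for all $i\ge q$, by monotonicity and Corollary~\ref{c:fsgrows} (Proposition~\ref{p:boot1}(b)). So the whole tail, including the linear-zone run that defeats your count, costs nothing.
\item This already happens at $q=\max(1,n-k-2-T_p)\le p$, proved by induction on $q$ (Proposition~\ref{p:boot3}). If the inequality fails at $i=1$, then $F_{k+1}(n-1)=F_k(n)-1$. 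One then recurses at $m=F_k(n)=n-p-1$, whose offset $m-k-2$ has triangular index $p-1$ and position one smaller.
\item The at most $q-1$ earlier terms are each $\ge -1$ by Proposition~\ref{p:boot1}(a). The bound $F_{k+1}^k(n-1)\ge p+1$ absorbs them, leaving $p-q\ge 0$ to spare.
\end{itemize}
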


Note that this result actually holds for all $n\ge k+3$, as shown
in Section~\ref{sub:strictorder}. But the generic argument given later
for $n>N_k$ does not hold in the triangular zone, where
we do not have $F_k<F_{k+1}$ everywhere
(see Proposition~\ref{p:Fdifflow}).
Instead, we proceed in a rather ad-hoc manner here: we compare
once again the sums obtained by Lemma~\ref{lemkey}, but this time
the initial terms in these sums may induce a ``debt'' that have to be
compensated for by the extra term $F_{k+1}^k(n-1)$ in the rightmost sum.
Before giving the proof of Lemma~\ref{bootstrap}, we start by a few
auxiliary results for comparing respective terms in these sums.

\begin{proposition}
\label{p:boot1}
Let $k,n\ge 1$ and $p\ge 0$.
\begin{enumerate}[label=(\alph*)]
\item We have $F_k^p(n) \le F_{k+1}^p(n-1) +1$;
\item if $F_k^p(n) \le F_{k+1}^p(n-1)$ then for all $q\ge p$ we also
  have $F_k^q(n) \le F_{k+1}^q(n-1)$.
\end{enumerate}
\end{proposition}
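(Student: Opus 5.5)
Both parts follow from Theorem~\ref{t:weak} (in its iterated form, Corollary~\ref{c:fsgrows}) together with the monotonicity and increment properties of Proposition~\ref{p:basicF}. No use of Lemma~\ref{lemkey} should be needed.

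For (a), I will compare $F_k^p(n)$ with $F_k^p(n-1)$ and then move from $F_k$ to $F_{k+1}$. By Proposition~\ref{p:basicF}(e), and trivially when $p=0$, we have $\partial F_k^p(n-1)\in\{0,1\}$, so $F_k^p(n) \le F_k^p(n-1)+1$. Corollary~\ref{c:fsgrows} at the point $n-1\ge 0$ gives $F_k^p(n-1)\le F_{k+1}^p(n-1)$. Chaining the two inequalities gives $F_k^p(n)\le F_{k+1}^p(n-1)+1$.

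For (b), I will prove $F_k^q(n) \le F_{k+1}^q(n-1)$ by induction on $q\ge p$. The base case $q=p$ is the hypothesis. For the step, set $a = F_k^q(n)$ and $b=F_{k+1}^q(n-1)$, and assume $a\le b$. Since $F_k$ is monotonically increasing (Proposition~\ref{p:basicF}(f)), we get $F_k^{q+1}(n) = F_k(a)\le F_k(b)$. Theorem~\ref{t:weak} at the point $b$ then gives $F_k(b) \le F_{k+1}(b) = F_{k+1}^{q+1}(n-1)$, which closes the induction.

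There is no real obstacle here. The only point needing care is to write the iterates in the right order, namely $F^{q+1}=F\circ F^q$, so that monotonicity of $F_k$ can be applied to the outer call. One should also keep in mind that the case $p=0$ (or $q=0$) is covered by the conventions $F^0=\mathrm{id}$ and the trivial increment $1$.
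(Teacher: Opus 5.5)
Your proof is correct and follows essentially the paper's route: part (a) combines the $\{0,1\}$ increments of Proposition~\ref{p:basicF} with Corollary~\ref{c:fsgrows}, and your induction in part (b) is the paper's one-line chain $F_k^{q-p}(F_k^p(n)) \le F_k^{q-p}(F_{k+1}^p(n-1)) \le F_{k+1}^{q-p}(F_{k+1}^p(n-1))$ unrolled one step at a time. Your chain in (a), $F_k^p(n)\le F_k^p(n-1)+1\le F_{k+1}^p(n-1)+1$, is also cleaner than the paper's text, which pairs $F_k^p(n)\le F_{k+1}^p(n)$ with the increment of $F_k^p$ where the increment of $F_{k+1}^p$ is what that chain actually needs.
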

\begin{proof}
\begin{enumerate}[label=(\alph*)]
\item By Corollary~\ref{c:fsgrows}, we have
 $F_k^p(n) \le F_{k+1}^p(n)$. Moreover $F_k^p(n) - F_k^p(n-1) \in \{0,1\}$ by
  Proposition~\ref{p:basicF}.
\item From $F_k^p(n) \le F_{k+1}^p(n-1)$, we obtain
\[
 F_k^q(n) = F_k^{q-p}(F_k^p(n)) \le F_k^{q-p}(F_{k+1}^p(n-1)) \le
 F_{k+1}^{q-p}(F_{k+1}^p(n-1)) = F_{k+1}^q(n-1)
\]
by monotonicity of $F_k^{q-p}$ (Proposition~\ref{p:basicF}) then by
Corollary~\ref{c:fsgrows}.
\end{enumerate}
\end{proof}

\begin{proposition}
\label{p:boot2}
Let $k\ge 1$ and $n \in [k+3,N_k]$ and $p \ge 0$.
If $T_p \le n-k-2 \le T_p +1$ then we have $F_k(n) = F_{k+1}(n-1)$.
\end{proposition}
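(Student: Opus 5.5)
We compute both sides explicitly using Proposition~\ref{p:Ftri}(c), once for $F_k$ at $n$ and once for $F_{k+1}$ at $n-1$. Since $T_p \le n-k-2 \le T_p+1 < T_{p+1}$ as soon as $p\ge 1$, the triangular index of $n-k-2$ is $p$, so $F_k(n) = n-p-1$. Here we need $p\ge 1$: indeed $n\ge k+3$ gives $n-k-2\ge 1$, so $p=0$ would force $n-k-2=1=T_1$, and the index would then be $1$. So either $p\ge 1$ directly, or the case $n-k-2=1$ can be reassigned to $p=1$. In the latter case $T_1\le n-k-2 \le T_1+1$ still holds, so we may assume $p\ge 1$ throughout, with $T_p\le n-k-2 \le \min(T_p+1, T_{p+1}-1)$.

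Next we treat $F_{k+1}(n-1)$. The relevant quantity is $(n-1)-(k+1)-2 = n-k-4$. We must check that $n-1$ lies in the range $[(k+1)+2, N_{k+1}]$ of Proposition~\ref{p:Ftri} for the function $F_{k+1}$. The upper bound is clear, since $n-1 < N_k < N_{k+1}$. The lower bound $n-1\ge k+3$ requires $n\ge k+4$. The boundary case $n=k+3$ is handled directly by Proposition~\ref{p:FIkp} and Proposition~\ref{p:FIinit}(a): there $F_k(k+3)=k+1$ and $F_{k+1}(k+2)=k+1$.

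For $n\ge k+4$ the triangular index of $n-k-4$ should be $p-1$. Indeed $n-k-4 \le T_p+1-2 < T_p$, and $n-k-4 \ge T_p-2 \ge T_{p-1}$, since $T_p-T_{p-1}=p\ge 2$ when $p\ge 2$. When $p=1$ we have $n-k-2\in\{1,2\}$, so $n-k-4\le 0 = T_0$, which forces $n-k-4=0$ (since $n\ge k+4$), and its index is $0=p-1$. Thus $F_{k+1}(n-1) = (n-1)-(p-1)-1 = n-p-1 = F_k(n)$.

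The only delicate point is the bookkeeping of the small cases ($p\le 1$ and $n=k+3$) and checking that both points lie in the triangular zones where Proposition~\ref{p:Ftri} applies. The main computation is then immediate.
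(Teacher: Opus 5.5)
Your proof is correct and follows essentially the paper's route: both sides are computed with Proposition~\ref{p:Ftri}(c), using triangular index $p$ for $n-k-2$ and $p-1$ for $(n-1)-(k+1)-2$, after the small cases are disposed of directly. The only difference is bookkeeping: the paper handles both $n=k+3$ and $n=k+4$ (i.e.\ $p\le 1$) through the explicit values of Propositions~\ref{p:FIinit} and~\ref{p:FIkp}, whereas you isolate only $n=k+3$ and let Proposition~\ref{p:Ftri} also cover $n=k+4$.
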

\begin{proof}
If $p \le 1$, we have either $n = k+3$ or $n=k+4$.
In these cases,
$F_k(k+3) = k+1 = F_{k+1}(k+2)$ and
$F_k(k+4) = k+2 = F_{k+1}(k+3)$ by Propositions~\ref{p:FIinit}
and~\ref{p:FIkp}.
Now we assume $p>1$.
By Proposition~\ref{p:Ftri}, we have $F_k(n) = n - p - 1$.
We can also obtain that $T_{p-1} \le (n-1)-(k+1)-2 < T_p$ hence
another instance of Proposition~\ref{p:Ftri} provides
$F_{k+1}(n-1) = (n-1) - (p-1)-1 = n-p-1$.
\end{proof}

The more $n-k-2$ differs from the previous triangular number,
the more we have to iterate $F_k$ and $F_{k+1}$ to get a favorable situation.

\begin{proposition}
\label{p:boot3}
Let $k\ge 1$ and $n \in [k+3,N_k]$ and $0 < q \le p$.
If $T_p \le n-k-2 \le T_p + q$ then we have $F_k^q(n) \le F_{k+1}^q(n-1)$.
\end{proposition}
\begin{proof}
Let $k\ge 1$.
We prove by induction over $q\ge 1$ that for all
$n\in[k+3,N_k]$ and
$p\ge q$ such that $T_p \le n-k-2 \le T_p + q$, then
$F_k^q(n) \le F_{k+1}^q(n-1)$.
The case $q=1$ is directly implied by the previous Proposition.
Now let $q\ge 2$ and assume the desired property holds at $q-1$.
Let $n\in[k+3,N_k]$ and
$p\ge q$ such that $T_p \le n-k-2 \le T_p + q$.
First, if $F_k(n) \le F_{k+1}(n-1)$, then Proposition~\ref{p:boot1}(b)
directly gives the desired inequality. Now we assume
$F_{k+1}(n-1) < F_k(n)$.
Thanks to Proposition~\ref{p:boot1}(a), this implies
$F_{k+1}(n-1) = F_k(n)-1$.
We abbreviate $F_k(n)$ as $m$.
Our goal $F_k^q(n) \le F_{k+1}^q(n-1)$
can be reformulated at $F_k^{q-1}(m) \le F_{k+1}^{q-1}(m-1)$, which is
an instance of the induction hypothesis for $q-1$. Indeed, we show below
that $m$ is still in $[k+3,N_k]$ and
$T_{p-1} \le m-k-2 \le T_{p-1}+q-1$.
For that, first note that $m = F_k(n) = n-p-1$ by Proposition~\ref{p:Ftri}
(here the additional constraint $q\le p$ is used to obtain $n-k-2 < T_{p+1}$).
Moreover, if we suppose that $n-k-2 \le T_p + 1$, then the previous
Proposition provides $F_k(n) = F_{k+1}(n-1)$, which is impossible
here. We hence have $T_p + 1 < n-k-2$.
\begin{itemize}
\item Let us show that $k+3 \le m \le N_k$. The right part is obvious
since $m = n-p-1 \le n \le N_k$. Now $k+3 \le n-p-1$ is equivalent to
$p+2 \le n-k-2$.
If $p\ge 3$, we indeed have
$p+2 \le T_p \le n-k-2$. Otherwise $p$ cannot be less than $q$, so
$p=q=2$. In this case, $n-k-2$ can only be $T_2+2 = 5$, which is greater
than $p+2=4$.
\item The fact that $T_{p-1} \le m-k-2 \le T_{p-1}+q-1$ is a direct
consequence of $m=n-p-1$ and $T_{p-1} = T_p - p$ and
$T_p+1 < n-k-2 \le T_p + q$.
\end{itemize}
Now that these constraints on $m$ are established, we obtain
$F_k^{q-1}(m) \le F_{k+1}^{q-1}(m-1)$ as planned thanks to the induction
hypothesis. This concludes the current induction.
\end{proof}

Now we are ready to prove Lemma~\ref{bootstrap}.

\begin{proof}[Proof of Lemma~\ref{bootstrap}]
Let $k\ge 1$ and $n\in [k+3,N_k]$. We prove $I_k^k(n+1) \le I_{k+1}^{k+1}(n)$.
Let $p$ be such that $T_p \le n-k-2 < T_{p+1}$.
Note that $p$ is at least $1$, otherwise we would have
$n-k-2 < T_1 =1$ hence $n<k+3$.
Also note that $p \le k+1$, since $T_p \le n-k-2 \le N_k-k-2 < T_{k+2}$.
Let $q$ be $\max(1,n-k-2-T_p)$. This ensures $0<q\le p$.
Moreover we have $q\le k$, otherwise we would have
$q = p = k+1$ hence $n = T_{k+1} +2k+3 > N_k$.
By Lemma~\ref{lemkey} we obtain
\begin{align*}
I_{k+1}^{k+1}(n) - I_k^k(n+1)
 &= n + \sum_{i=0}^{k} F_{k+1}^i(n-1) - (n+1) - \sum_{i=0}^{k-1} F_k^i(n) \\
 &= F_{k+1}^{k}(n-1) - 2 + \sum_{i=1}^{k-1} (F_{k+1}^i(n-1) - F_k^i(n)).
\end{align*}
For $i \ge q$, the differences $(F_{k+1}^i(n-1) - F_k^i(n))$ are
all non-negative. Indeed we have
$F_k^q(n)\le F_{k+1}^q(n-1)$ by Proposition~\ref{p:boot3}.
This propagates to all $i\ge q$ by Proposition~\ref{p:boot1}(b).
Now we consider the low indices $1\le i \le q-1$. The corresponding
differences are all at least
$-1$ thanks to Proposition~\ref{p:boot1}(a). There are $q-1$ such low
indices, so the whole sum is at least $1-q$.
Meanwhile, $F_{k+1}^k(n-1)$ is at least $p+1$, thanks to
Proposition~\ref{p:Fkkm1} used with $k+1$ and $n-1$ and $p-1$.
Indeed $n-1 \in [k+2,N_{k+1}-1]$ here, since $N_k\le N_{k+1}$.
And $T_p \le n-k-2$ hence $T_{p-1} \le T_p-1 \le (n-1)-(k+1)-1$.
Finally $I_{k+1}^{k+1}(n) - I_k^k(n+1)$ is at least $(p+1)-2+(1-q) \ge 0$.
\end{proof}

The previous proof could be adapted to obtain the strict inequality
$I_k^k(n+1) < I_{k+1}^{k+1}(n)$ except when $n=k+3$, but this refinement
does not seem useful later.

\subsection{The main induction for strict ordering}
\label{sub:strictorder}

This Section presents the core of the proof of Theorem~\ref{t:strict},
namely that $F_k(n) < F_{k+1}(n)$ when $n$ exceeds $N_k$.
The proof given here is similar to the one of
Theorem~\ref{t:weak} in Section~\ref{shortweak}. Here again, we relate
$F_k$ and $F_{k+1}$ on one side and $I_k^k$ and $I_{k+1}^{k+1}$ on the
other side. The Lemma~\ref{lemkey} is once again crucial in one direction.
But this time, a strict order such as $I_k^k < I_{k+1}^{k+1}$
at some point is not
sufficient to guarantee a strict order $F_k < F_{k+1}$ at some other
point. Instead, we need a stronger property between $I_k^k$ and
$I_{k+1}^{k+1}$, more precisely that
$I_k^k(n+1) \le I_{k+1}^{k+1}(n)$ for some position $n$.
Moreover, this time the desired properties only hold after a
transitory phase, leading to a quite delicate proof. In particular
we rely on the ``bootstrap'' Lemma~\ref{bootstrap} proved in the
previous section.

First, the next lemma is an adaptation of Lemma~\ref{lem:F2I}
in the case of strict monotonicity.

\begin{lemma}
\label{lem:F2Ioverlap}
For $k\ge 1$ and $n\ge 0$, if $F_k(n) < F_{k+1}(n)$ then
  $I_k^k(n+1) \le I_{k+1}^{k+1}(n)$.
\end{lemma}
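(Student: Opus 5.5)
The approach is to compare the two sides through the explicit sums of Lemma~\ref{lemkey}, exactly as in the proof of Lemma~\ref{bootstrap}, and to use the hypothesis $F_k(n) < F_{k+1}(n)$ to make every difference in the sum non-negative.

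First, $n$ is large enough for the estimates to apply. By Proposition~\ref{p:Fdifflow}(a), $F_k$ and $F_{k+1}$ agree on $[0,k+2]$, so the hypothesis forces $n \ge k+3$; in particular $n>0$. Lemma~\ref{lemkey} with $j=k$ (for $F_k$ at $n+1$) and with $j=k+1$ (for $F_{k+1}$ at $n$) gives
\[
I_{k+1}^{k+1}(n) - I_k^k(n+1) = F_{k+1}^{k}(n-1) - 2 + \sum_{i=1}^{k-1} \bigl(F_{k+1}^i(n-1) - F_k^i(n)\bigr).
\]
This is the same identity as in the proof of Lemma~\ref{bootstrap}. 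The $-2$ collects the extra $+1$ in $(n+1)$ and the $i=0$ terms $(n-1)-n$.

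Second, every difference in the sum is non-negative. By Proposition~\ref{p:basicF}(e), $F_{k+1}(n) \le F_{k+1}(n-1)+1$. Together with the strict hypothesis, which gives $F_k(n) \le F_{k+1}(n)-1$, this yields $F_k(n) \le F_{k+1}(n-1)$. Proposition~\ref{p:boot1}(b) with $p=1$ then propagates this to $F_k^i(n) \le F_{k+1}^i(n-1)$ for all $i\ge 1$.

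It remains to show $F_{k+1}^k(n-1) \ge 2$. Since $n-1 \ge k+2$, monotonicity gives $F_{k+1}^k(n-1) \ge F_{k+1}^k(k+2)$. Because $k+2 \le (k+1)+2$, Proposition~\ref{p:FIinit}(a) applied to $F_{k+1}$ gives $F_{k+1}^k(k+2) = \max(1,2) = 2$. Hence the whole difference is at least $0$, which proves the lemma.

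The only delicate step is this lower bound $F_{k+1}^k(n-1)\ge 2$, which absorbs the constant $-2$. It is why the reduction to $n\ge k+3$ via Proposition~\ref{p:Fdifflow} must be done first; everything else is mechanical.
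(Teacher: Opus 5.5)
Your proof is correct and follows the paper's argument essentially step for step. You reduce to $n\ge k+3$ via Proposition~\ref{p:Fdifflow}, expand both sides with Lemma~\ref{lemkey}, absorb the $-2$ using $F_{k+1}^k(n-1)\ge F_{k+1}^k(k+2)=2$, and make the remaining differences non-negative from $F_k(n)\le F_{k+1}(n-1)$. The only cosmetic difference is that you cite Proposition~\ref{p:boot1}(b) to pass to higher iterates, whereas the paper writes out the same chain of inequalities directly from Corollary~\ref{c:fsgrows} and monotonicity.
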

\begin{proof}
Let $n$ be such that $F_k(n) < F_{k+1}(n)$.
This implies that $n$ is at least $k+3$ (see Proposition~\ref{p:Fdifflow}).
In particular, $n$ is strictly positive. By Lemma~\ref{lemkey} we obtain
\begin{align*}
I_{k+1}^{k+1}(n) - I_k^k(n+1)
 &= n + \sum_{i=0}^{k} F_{k+1}^i(n-1) - (n+1) - \sum_{i=0}^{k-1} F_k^i(n) \\
 &= F_{k+1}^{k}(n-1) - 2 + \sum_{i=1}^{k-1} (F_{k+1}^i(n-1) - F_k^i(n)).
\end{align*}
Since $n-1\ge k+2$ we have
$F_{k+1}^k(n-1) \ge F_{k+1}^k(k+2) = \max(1,k+2-k) = 2$ by
Proposition~\ref{p:FIinit}(a).
Moreover, all the differences in the remaining sum are non-negative.
Indeed, we recall from Proposition~\ref{p:basicF} that
$\partial F_{k+1}(n-1) \in \{0,1\}$. Since $F_k(n) < F_{k+1}(n)$, this
implies that $F_k(n) \le F_{k+1}(n-1)$.
More generally, for $i\ge 1$, we have
\[
F_k^i(n) = F_k^{i-1}(F_k(n)) \le F_{k+1}^{i-1}(F_k(n)) \le
F_{k+1}^{i-1}(F_{k+1}(n-1)) = F_{k+1}^i(n-1)
\]
by Corollary~\ref{c:fsgrows} and then monotonicity of $F_{k+1}^{i-1}$.
\end{proof}

Now we finally prove Theorem~\ref{t:strict}.

\begin{proof}[Proof of Theorem~\ref{t:strict}]
Let $k\ge 1$. We prove $F_k(n) < F_{k+1}(n)$ by strong induction on $n> N_k$.
Let $n>N_k$ and assume that $F_k < F_{k+1}$ everywhere in
$[1+N_k,n-1]$.
We consider $m = F_k^k(n-1)$, which is also $n-F_k(n)$.
The chosen value of $m$ ensures that $n \le I_k^k(m+1)$.
Indeed, we have $F_k^k(n-1) < m+1$,
so by Corollary~\ref{galoisI} we have $n-1 < I_k^k(m+1)$.
By Proposition~\ref{p:FIkp} we obtain
$I_k^k(k+3) = N_k < n \le I_k^k(m+1)$.
The strict monotonicity of $I_k^k$ (see Proposition~\ref{p:IL})
implies $k+3 < m+1$, i.e., $k+3 \le m$.
Now we prove $F_k(n) < F_{k+1}(n)$, or equivalently
$n-(m-1) \le F_{k+1}(n)$ (since $m = n - F_k(n)$).
By Proposition~\ref{p:highlow}(b), this
inequality holds as soon as $n \le I_{k+1}^{k+1}(m)$.
Since $n \le I_k^k(m+1)$, it is sufficient to establish
$I_k^k(m+1) \le I_{k+1}^{k+1}(m)$.
This fact is provided by the ``bootstrap''
Lemma~\ref{bootstrap} when $m \le N_k$. Besides, when $m > N_k$
the induction hypothesis holds for $m$ and can be combined with
Lemma~\ref{lem:F2Ioverlap} to conclude.
\end{proof}

In the previous proof, the strong induction is presented without
explicit base case. Nevertheless, for $n=1+N_k$ and the values
immediately after, the corresponding $m = F_k^k(n-1)$ is less than
or equal to $N_k$, leading to the use of the ``bootstrap''
Lemma~\ref{bootstrap} and preventing any circularity.

Besides, the value $m$ used during the proof could also be chosen as
$n-F_{k+1}(n) = F_{k+1}^{k+1}(n-1)$ instead of $F_k^k(n-1)$.
But this time, the adaptations to the proof are more consequent than
in Section~\ref{shortweak}:
this smaller alternative $m$ might be smaller than $k+3$, so a few extra
cases have be to dealt with.

\begin{corollary}
For $k\ge 1$ and $n\ge k+3$, we have $I_k^k(n+1) \le I_{k+1}^{k+1}(n)$.
\end{corollary}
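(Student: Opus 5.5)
The claim splits at $N_k$: the range $n\in[k+3,N_k]$ is handled by an earlier lemma, and the range $n>N_k$ by Theorem~\ref{t:strict} combined with Lemma~\ref{lem:F2Ioverlap}.

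\emph{First range.} When $n\in[k+3,N_k]$, the inequality $I_k^k(n+1)\le I_{k+1}^{k+1}(n)$ is exactly the statement of the bootstrap Lemma~\ref{bootstrap}, so nothing more is needed there.

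\emph{Second range.} When $n>N_k$, I apply Theorem~\ref{t:strict}, which is now fully proved, to get $F_k(n)<F_{k+1}(n)$. Lemma~\ref{lem:F2Ioverlap} then turns this strict inequality into $I_k^k(n+1)\le I_{k+1}^{k+1}(n)$.

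These two ranges together cover all $n\ge k+3$, since $N_k\ge k+3$. The only point to check is that the two cases really partition the integers $n\ge k+3$; this holds by definition, because every such $n$ satisfies either $n\le N_k$ or $n>N_k$. The corollary is essentially a restatement of the case split already made in the proof of Theorem~\ref{t:strict}, so I expect no real obstacle.
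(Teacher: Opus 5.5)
Your proof is correct and is exactly the paper's argument: Lemma~\ref{bootstrap} covers $n\in[k+3,N_k]$, and for $n>N_k$ Theorem~\ref{t:strict} supplies the hypothesis $F_k(n)<F_{k+1}(n)$ of Lemma~\ref{lem:F2Ioverlap}.
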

\begin{proof}
If $n\le N_k$, the result is given by
Lemma~\ref{bootstrap}. Otherwise $n > N_k$ and
we apply Lemma~\ref{lem:F2Ioverlap} whose condition is ensured
by Theorem~\ref{t:strict}.
\end{proof}

\begin{corollary}
For $k\ge 1$ and $n > N_k$, we have $F_k(n) \le F_{k+1}(n-1)$.
\end{corollary}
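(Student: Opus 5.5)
The idea is to use the Galois connection with $I_{k+1}^{k+1}$ together with the preceding corollary, $I_k^k(m+1) \le I_{k+1}^{k+1}(m)$ for $m\ge k+3$, in the same spirit as the proof of Theorem~\ref{t:strict}.

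Fix $k\ge 1$ and $n>N_k$. Set $m = F_k^k(n-1) = n - F_k(n)$. As shown in the proof of Theorem~\ref{t:strict}, this choice gives $n \le I_k^k(m+1)$ and $m\ge k+3$. The goal $F_k(n) \le F_{k+1}(n-1)$ rewrites as $n-m \le F_{k+1}(n-1)$.

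To reach it, I would apply Proposition~\ref{p:highlow}(b) to $F_{k+1}$ at the point $n-1$ with parameter $m-1$ (which is $\ge 0$). This yields $(n-1)-(m-1) = n-m \le F_{k+1}(n-1)$, provided that $n-1 \le I_{k+1}^{k+1}(m)$. That proviso follows from the chain
\[
n-1 < n \le I_k^k(m+1) \le I_{k+1}^{k+1}(m),
\]
where the last inequality is the previous corollary, valid because $m\ge k+3$.

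An alternative route is to combine Theorem~\ref{t:strict}, which gives $F_k(n)<F_{k+1}(n)$, with the fact that $\partial F_{k+1}(n-1)\in\{0,1\}$. Together these give $F_k(n)\le F_{k+1}(n)-1\le F_{k+1}(n-1)$ directly. This is exactly the argument already used inside Lemma~\ref{lem:F2Ioverlap}, and it is even simpler. I would present it as the main proof and mention the Galois-connection argument as a variant.

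I expect no real obstacle. The only point to check is that the hypothesis $n>N_k$ matches the threshold of Theorem~\ref{t:strict}, and it does.
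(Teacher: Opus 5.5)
Your main argument is correct and is exactly the paper's proof: Theorem~\ref{t:strict} gives $F_k(n)<F_{k+1}(n)$, and $\partial F_{k+1}(n-1)\in\{0,1\}$ then yields $F_k(n)\le F_{k+1}(n-1)$. Your Galois-connection variant is also valid, since it essentially re-derives Theorem~\ref{t:strict} through the preceding corollary and then weakens the conclusion by one, but it adds nothing beyond the direct argument.
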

\begin{proof}
By Theorem~\ref{t:strict}, we have $F_k(n) < F_{k+1}(n)$ here.
Moreover $F_{k+1}(n) - F_{k+1}(n-1) \in \{0,1\}$ by Proposition~\ref{p:basicF}.
\end{proof}

The following corollary is mentioned as a conjecture in our previous
article~\cite[Sect.~7]{LetouzeyShuoSteiner2026}.

\begin{corollary}
For $k\ge 1$ and $n \neq 1$, we have $F_k(n) < F_{k+1}(n+1)$.
\end{corollary}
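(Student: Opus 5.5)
We must prove $F_k(n) < F_{k+1}(n+1)$ for $n\neq 1$. The plan is to split according to the position of $n$ relative to $N_k$, using Theorem~\ref{t:weak} in combination with the unit increments of $F_{k+1}$.

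\textbf{Large $n$.} When $n+1 > N_k$, i.e.\ $n\ge N_k$, Theorem~\ref{t:strict} applies at $n+1$. It gives $F_k(n+1)<F_{k+1}(n+1)$. Since $F_k$ is increasing, $F_k(n)\le F_k(n+1)<F_{k+1}(n+1)$. Note that the inequality is used at $n+1$, not at $n$, so the case $n=N_k$ is already covered here.

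\textbf{Small $n$.} For the remaining range $n\le N_k-1$, we use the explicit descriptions of Section~\ref{s:aux}.

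The case $n=0$ is trivial: $0<1=F_{k+1}(1)$.

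The case $n=1$ is genuinely excluded, since $F_k(1)=1=F_{k+1}(2)$.

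For $n\in[2,k+1]$, Proposition~\ref{p:FIinit}(a) gives $F_k(n)=n-1$. Since $n+1\le k+2\le (k+1)+2$, the same proposition applied to $F_{k+1}$ gives $F_{k+1}(n+1)=n$, which is larger.

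For $n\in[k+2,N_k-1]$, we have $n+1\in[k+3,N_k]$. By Proposition~\ref{p:Fdifflow}(b) and Theorem~\ref{t:weak}, $F_{k+1}(n+1)\ge F_k(n+1)$. It therefore suffices to show $F_k(n+1)=F_k(n)+1$, or else that $F_{k+1}(n+1)=F_k(n+1)+1$.

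\textbf{Main step: the triangular zone.} This is where the real work lies, and we carry it out with Proposition~\ref{p:Ftri}. Let $p$ be the triangular index of $n-k-2$, so that $F_k(n)=n-p-1$. Two cases arise.

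If $n+1-k-2$ has the same triangular index $p$, then $F_k(n+1)=n-p=F_k(n)+1$, and we are done.

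Otherwise $n+1-k-2=T_{p+1}$ is a triangular number. Then $F_k(n+1)=n+1-(p+1)-1=F_k(n)$, so $F_k$ is flat at this step. In exactly this situation, Proposition~\ref{p:Fdifflow}(b) gives $F_{k+1}(n+1)=F_k(n+1)+1=F_k(n)+1$.

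In both cases, $F_{k+1}(n+1)>F_k(n)$.

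\textbf{Boundary checks.} The endpoint $n=k+2$ falls under the same analysis, since $n-k-2=0=T_0$ and $n+1=k+3$ lies in the triangular zone; alternatively one can read off Proposition~\ref{p:Fdifflow}(c) directly. We also need Proposition~\ref{p:Ftri} to be valid at $n$, i.e.\ $n\ge k+2$, which holds.

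We expect the only delicate point to be the bookkeeping of triangular indices at the boundaries $n=k+2$ and $n+1=N_k$. Both are within the hypotheses of Propositions~\ref{p:Ftri} and~\ref{p:Fdifflow}.
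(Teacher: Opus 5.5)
Your proof is correct and uses the same zone decomposition as the paper: the trivial case, the linear zone via Proposition~\ref{p:FIinit}(a), the triangular zone via Proposition~\ref{p:Ftri}, and Theorem~\ref{t:strict} beyond $N_k$, with only cosmetic shifts of the boundary points $n=k+2$ and $n=N_k$. In the triangular zone the paper avoids your case split by applying Proposition~\ref{p:Ftri} directly to $F_{k+1}$ at $n+1$: since $(n+1)-(k+1)-2=n-k-2$ has the same triangular index $p$, it gets $F_{k+1}(n+1)=n-p=F_k(n)+1$ at once.
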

\begin{proof}
\begin{itemize}
\item The case $n = 0$ is obvious.
\item If $n \in [2,k+2]$, we have $F_k(n) = n-1 < n = F_{k+1}(n+1)$
by Proposition~\ref{p:FIinit}(a).
\item For $n \in [k+3,N_k]$, we consider $p$ such that
$T_p \le n-k-2 < T_{p+1}$.
Then $F_k(n) = n-p-1$ by Proposition~\ref{p:Ftri}.
Similarly, we have $F_{k+1}(n+1) = n+1-p-1$ by
Proposition~\ref{p:Ftri}, since $n+1 \in [k+4,N_k+1]$ and $N_k+1 \le N_{k+1}$
and $T_p \le (n+1)-(k+1)-2 < T_{p+1}$.
\item
Finally, if $n > N_k$, we have $F_k(n) < F_{k+1}(n) \le F_{k+1}(n+1)$ by
Theorem~\ref{t:strict} and then monotonicity of $F_{k+1}$.
\end{itemize}
\end{proof}

\section{First difference of 2 \texorpdfstring{between $F_k$ and $F_{k+1}$}{}}
\label{s:diff2}

From Theorem~\ref{t:strict}, we know that $F_k(n) < F_{k+1}(n)$ when
$n>N_k$. In this section we extend Proposition~\ref{p:Fdifflow} by
studying $F_{k+1}(n)-F_k(n)$ immediately
after $N_k$. This difference remains exactly $1$ as long as
$N_k < n < N_{k+1}$ and reaches $2$ for the first time in $N_{k+1}$.

\begin{proposition}
For $k\ge 1$ and $n$ such that $N_k < n < N_{k+1}$, we have
 $F_{k+1}(n) = 1 + F_k(n)$.
\end{proposition}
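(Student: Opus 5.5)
The lower bound $F_{k+1}(n) \ge 1+F_k(n)$ is exactly Theorem~\ref{t:strict}, since $n>N_k$. So the whole work is the upper bound $F_{k+1}(n) \le 1+F_k(n)$. I plan to compute $F_{k+1}(n)$ exactly with the triangular-zone description for $k+1$, and then bound $F_k(n)$ from below with Proposition~\ref{p:highlow}(b). That bound will be fed by the explicit values of $I_k^k$ from Proposition~\ref{p:FIkp}.

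First I locate $n$ in the triangular zone of $F_{k+1}$. From $N_k<n<N_{k+1}$ and $k\ge 1$ we get $n\in[(k+1)+2,N_{k+1}]$, so Proposition~\ref{p:Ftri}(c), applied at rank $k+1$, gives $F_{k+1}(n)=n-p'-1$, where $p'$ is the triangular index of $n-k-3$. To determine $p'$, I use $N_k=T_{k+3}-3=T_{k+2}+k$. The lower bound gives $n-k-3\ge N_k+1-k-3=T_{k+2}-2\ge T_{k+1}$. The upper bound gives $n-k-3\le N_{k+1}-k-4=T_{k+4}-k-7<T_{k+3}$. Hence $p'\in\{k+1,k+2\}$, and I split into these two cases:
\begin{itemize}
\item[] \emph{Case $p'=k+1$.} Here $n-k-3<T_{k+2}$, so $n\le T_{k+2}+k+2=N_k+2$; that is, $n\in\{N_k+1,N_k+2\}$ and $F_{k+1}(n)=n-k-2$.
\item[] \emph{Case $p'=k+2$.} Here $N_k+3\le n\le N_{k+1}-1$ and $F_{k+1}(n)=n-k-3$.
\end{itemize}

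It remains to show $F_k(n)\ge F_{k+1}(n)-1=n-(p'+2)$. By Proposition~\ref{p:highlow}(b) with $m=p'+2$, it suffices to check $n\le I_k^k(p'+3)$. For $p'=k+1$ this reads $n\le I_k^k(k+4)=N_k+2$, which holds by the case analysis. For $p'=k+2$ it reads $n\le I_k^k(k+5)=N_{k+1}-1$, which holds because $n<N_{k+1}$. Both values of $I_k^k$ come from Proposition~\ref{p:FIkp}. Combined with Theorem~\ref{t:strict}, this yields $F_{k+1}(n)=1+F_k(n)$.

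The main obstacle is the bookkeeping in the second paragraph: pinning down the triangular index $p'$ precisely and checking that the thresholds $N_k+2$ and $N_{k+1}-1$ match exactly the values of $I_k^k$ listed in Proposition~\ref{p:FIkp}. An off-by-one error there would break the argument, since the inequality is tight at $n=N_k+2$ and at $n=N_{k+1}-1$. It is also the reason the statement stops just before $N_{k+1}$.
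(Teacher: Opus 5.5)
Your proof is correct, and the bookkeeping you flagged as the main risk is right. For $N_k<n<N_{k+1}$, the triangular index of $n-k-3$ is $k+1$ exactly when $n\in\{N_k+1,N_k+2\}$, and $k+2$ otherwise. The two thresholds coincide with $I_k^k(k+4)=N_k+2$ and $I_k^k(k+5)=N_{k+1}-1$ from Proposition~\ref{p:FIkp}.

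The paper arranges the argument the other way round. It takes $m=F_k^k(n-1)=n-F_k(n)$ and uses Corollary~\ref{galoisI} with Proposition~\ref{p:FIkp} to show $m\in\{k+3,k+4\}$, so $F_k(n)=n-m$ is known. It then obtains $F_{k+1}(n)\le n-(m-1)$ from Proposition~\ref{p:highlow}(a) at rank $k+1$, by checking $I_{k+1}^{k+1}(m-1)\le I_k^k(m)$ with the linear-zone values of Proposition~\ref{p:FIinit}(b). You instead notice that the whole interval $(N_k,N_{k+1})$ still lies in the triangular zone of $F_{k+1}$. So Proposition~\ref{p:Ftri} gives you $F_{k+1}(n)$ exactly, and only a lower bound on $F_k$ via Proposition~\ref{p:highlow}(b) remains.

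If you unfold Proposition~\ref{p:Ftri}(a), your argument rests on exactly the inequalities $I_{k+1}^{k+1}(p'+1)<n\le I_k^k(p'+3)$. These are the paper's inequalities with $m=p'+2$, so the underlying computation is the same and only the packaging differs. Your version reads the case split directly off $n$ and reuses an already proved proposition. The paper's keeps the same pivot $m=F_k^k(n-1)$ as in the proof of Theorem~\ref{t:strict} and does not need Proposition~\ref{p:Ftri}.
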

\begin{proof}
Consider such $k$ and $n$.
By Theorem~\ref{t:strict} we know that $F_k(n) < F_{k+1}(n)$.
Hence it is sufficient to prove that $F_{k+1}(n) \le 1 + F_{k}(n)$.
We pose $m = F_k^k(n-1)$, which is also $n-F_k(n)$.
The chosen value of $m$ ensures that $I_k^k(m) < n \le I_k^k(m+1)$
by Corollary~\ref{galoisI}.
This implies that $k+3 \le m \le k+4$.
Indeed, $I_k^k(k+3) = N_k < n \le I_k^k(m+1)$ by
Proposition~\ref{p:FIkp}. Since $I_k^k$ is strictly increasing, we
hence have $k+3 < m+1$. Similarly,
$I_k^k(m) < n \le N_{k+1}-1 = I_k^k(k+5)$ by
Proposition~\ref{p:FIkp}, and hence $m < k+5$.

Moreover the inequality $F_{k+1}(n) \le 1 + F_k(n)$ can
also be stated as $F_{k+1}(n) \le n -(m-1)$.
By Proposition~\ref{p:highlow}, it is sufficient to prove
$I_{k+1}^{k+1}(m-1) < n$, which is here a consequence of
$I_{k+1}^{k+1}(m-1) \le I_k^k(m)$.
We prove this last inequality in the two possibles
cases $m=k+3$ and $m=k+4$.
If $m=k+3$, we have $I_k^k(k+3) = N_k$ by Proposition~\ref{p:FIkp}
and $I_{k+1}^{k+1}(k+2) = T_{k+1} + k+2 = N_k -k$ by
Proposition~\ref{p:FIinit}.
If $m=k+4$, we have $I_k^k(k+4) = N_k+2$ by Proposition~\ref{p:FIkp}
and $I_{k+1}^{k+1}(k+3) = T_{k+2} + k+2 = N_k +2$ by
Proposition~\ref{p:FIinit}.
\end{proof}

\begin{proposition}
\label{p:diff2}
For $k\ge 1$, we have $F_{k+1}(N_{k+1}) = N_k + 1 = F_k(N_{k+1})+2$.
\end{proposition}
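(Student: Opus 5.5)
The statement has two parts: $F_{k+1}(N_{k+1}) = N_k+1$ and $F_k(N_{k+1}) = N_k-1$. I would prove them separately. The first is already contained in earlier results, and the second follows from Proposition~\ref{p:highlow} with one well-chosen value of $m$. I do not expect any serious obstacle, only some bookkeeping with the values from Proposition~\ref{p:FIkp}.

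For the value of $F_{k+1}$, I would apply Proposition~\ref{p:Fdifflow}(c) with $k+1$ in place of $k$. This is legitimate since $k+1\ge 1$. Its last line gives $F_{k+2}(N_{k+1}) = F_{k+1}(N_{k+1}) = N_k+1$, which is exactly the first equality.

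For $F_k(N_{k+1})$, I would first compute $N_{k+1}-N_k = \tfrac12\bigl((k+2)(k+7)-(k+1)(k+6)\bigr) = k+4$. The target $F_k(N_{k+1}) = N_k-1$ is therefore the same as $F_k(N_{k+1}) = N_{k+1}-(k+5)$. By the two halves of Proposition~\ref{p:highlow} with $m=k+5$, it suffices to check
\[
I_k^k(k+5) < N_{k+1} \le I_k^k(k+6).
\]
The left inequality is immediate, because Proposition~\ref{p:FIkp} gives $I_k^k(k+5) = N_{k+1}-1$.

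For the right inequality I would use Corollary~\ref{c:dIkk} at the point $k+5\ge 1$. It says that every increment of $I_k^k$ is at least $2$, so $I_k^k(k+6) \ge I_k^k(k+5)+2 = N_{k+1}+1 \ge N_{k+1}$. Proposition~\ref{p:highlow} then yields $F_k(N_{k+1}) = N_{k+1}-k-5 = N_k-1$. Combining this with the first part gives $F_{k+1}(N_{k+1}) = F_k(N_{k+1})+2$, as claimed.

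The only step requiring care is identifying $m=k+5$, that is, the arithmetic $N_{k+1}-N_k=k+4$. It also matters that the bracketing needs only the coarse lower bound $2$ on the increments of $I_k^k$, not their exact value.
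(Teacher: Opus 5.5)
Your proposal is correct and follows essentially the same route as the paper. You get the first equality from Proposition~\ref{p:Fdifflow}(c) at $k+1$. For the second, you bracket $N_{k+1}$ between $I_k^k(k+5)=N_{k+1}-1$ and $I_k^k(k+6)\ge I_k^k(k+5)+2$ using Corollary~\ref{c:dIkk}, and then apply Proposition~\ref{p:highlow} with $m=k+5$.
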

\begin{proof}
The left equality is implied by Proposition~\ref{p:Fdifflow}(c) for
$k+1$.
Now we prove that $F_k(N_{k+1}) = N_k-1$ or equivalently that
$F_k(N_{k+1}) = N_{k+1} - (k+5)$.
Thanks to Proposition~\ref{p:highlow}, it is hence sufficient to
establish $I_k^k(k+5) < N_{k+1} \le I_k^k(k+6)$.
These inequalities are valid since $I_k^k(k+5) = N_{k+1} - 1$
by Proposition~\ref{p:FIkp},
 while $I_k^k(k+6)$ is at least $I_k^k(k+5)+2$ thanks to
Corollary~\ref{c:dIkk}.
\end{proof}

\section{Proving two remaining conjectures}
\label{formerconj}

In this section, we prove two properties that were mentioned as
conjectures in our previous article~\cite[Sect.~7]{LetouzeyShuoSteiner2026}.

\subsection{Strict ordering of \texorpdfstring{$L_{k+1}$ and $L_k$}{L}}

Here we prove that $L_{k+1}(n) < L_k(n)$ when $n$ is large enough.
It is a relatively direct consequence of Theorem~\ref{t:strict}.
First, we express some more equations about $F_k$ around $N_k$.

\begin{proposition}
\label{p:Fagain}
Let $k\ge 1$. We have
\begin{align*}
 F_k(N_k-1) &= N_{k-1}; \\
 F_k(N_k) &= N_{k-1}+1; \\
 F_k(N_k+1) &= N_{k-1}+1; \\
 F_k(N_k+2) &= N_{k-1}+2.
\end{align*}
\end{proposition}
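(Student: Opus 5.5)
The second equation, $F_k(N_k)=N_{k-1}+1$, is already contained in Proposition~\ref{p:Fdifflow}(c), so the plan is to obtain the other three from it by small local arguments.

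\emph{The value at $N_k-1$.} I would compute it directly with Proposition~\ref{p:Ftri}. We have $N_k-1-k-2 = T_{k+2}-3$, and for $k\ge 1$ this satisfies $T_{k+1}\le T_{k+2}-3<T_{k+2}$, since $k+2\ge 3$. So its triangular index is $k+1$. Moreover $N_k-1\ge k+2$, because $N_k-k-2=T_{k+2}-2\ge 4$. Proposition~\ref{p:Ftri}(c) therefore gives
\[
F_k(N_k-1)=N_k-1-(k+1)-1 = N_k-k-3 .
\]
A routine triangular-number computation shows this equals $N_{k-1}$: indeed $N_k-N_{k-1}=T_{k+3}-T_{k+2}=k+3$. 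The same computation at $N_k$ itself has triangular index $k+1$ again, since $T_{k+2}-2\ge T_{k+1}$, and yields $N_k-k-2=N_{k-1}+1$. This is consistent with Proposition~\ref{p:Fdifflow}(c).

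\emph{The value at $N_k+1$.} Here we leave the triangular zone, so I would use Proposition~\ref{p:highlow} with $m=k+3$. By Proposition~\ref{p:FIkp}, $I_k^k(k+3)=N_k<N_k+1$ and $I_k^k(k+4)=N_k+2\ge N_k+1$. Hence
\[
F_k(N_k+1)=N_k+1-(k+3)=N_{k-1}+1 .
\]
Equivalently, one could note that $\partial F_k(N_k)=0$.

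\emph{The value at $N_k+2$.} This works the same way: $N_k+2\le I_k^k(k+4)$ and $I_k^k(k+3)<N_k+2$. Proposition~\ref{p:highlow} with $m=k+3$ then gives $F_k(N_k+2)=N_k+2-(k+3)=N_{k-1}+2$. Alternatively, since $F_k$ has just been flat between $N_k$ and $N_k+1$, it cannot be flat twice in a row, by the argument used in Proposition~\ref{p:FIkp}. The Proposition~\ref{p:highlow} route is cleaner, so I would use it.

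\emph{Where the difficulty lies.} The only delicate point is the bookkeeping of the triangular index at $N_k-1$ and $N_k$ in small cases such as $k=1$. There $N_1=7$, and one needs $T_2=3\le 2\le\ldots$ to check out; I would verify that case by hand: $F_1(6)=3=N_0$ and $F_1(7)=4$. Everything else is direct substitution into Propositions~\ref{p:FIkp}, \ref{p:Ftri} and~\ref{p:highlow}.
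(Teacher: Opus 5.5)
Your proof is correct and essentially matches the paper's, which handles all four values uniformly through Proposition~\ref{p:highlow}: it takes $m=k+2$ with $I_k^k(k+2)=N_{k-1}+2$ and $I_k^k(k+3)=N_k$ for the first two values, and $m=k+3$ for the last two exactly as you do, so your use of Proposition~\ref{p:Ftri} at $N_k-1$ is just that same argument already packaged. The separate check for $k=1$ is unnecessary, since your bound $T_{k+2}-T_{k+1}=k+2\ge 3$ already covers it, and the inequality ``$T_2=3\le 2$'' you wrote there is a slip: for $k=1$ the relevant value is $N_1-1-k-2=3=T_2$.
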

\begin{proof}
The equation $F_k(N_k) = N_{k-1}+1$ already appears in
Proposition~\ref{p:Fdifflow}(c).
More generally, we can reformulate the desired equations into
$F_k(n) = n - m$ with respectively $(n,m) = (N_k-1,k+2)$
then $(n,m) = (N_k, k+2)$ then $(n,m) = (N_k+1, k+3)$ then
$(n,m) = (N_k+2,k+3)$. Indeed, note that $N_k = N_{k-1}+k+3$.
By Proposition~\ref{p:highlow},
the equation $F_k(n) = n - m$ holds as soon as
$I_k^k(m) < n \le I_k^k(m+1)$.
We recall from Propositions~\ref{p:FIinit} and~\ref{p:FIkp} that
$I_k^k(k+2) = T_{k+1}+k+1 = N_{k-1}+2$ and
$I_k^k(k+3) = N_k$ and $I_k^k(k+4) = N_k + 2$.
From there, the necessary inequalities are proved by routine reasoning
on triangular numbers.
\end{proof}

\begin{theorem}
For $k\ge 1$ and $n > N_{k-1}$, we have $L_{k+1}(n) < L_k(n)$.
When $k>1$, we have equality at $N_{k-1}$, namely
$L_{k+1}(N_{k-1}) = N_k - 1 = L_k(N_{k-1})$.
\end{theorem}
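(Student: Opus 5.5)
The plan is to turn the comparison of $L_{k+1}$ and $L_k$ into a comparison of $F_k$ and $F_{k+1}$ at a single point, using the Galois connection of Corollary~\ref{galoisL}, and then to invoke Theorem~\ref{t:strict}.

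\emph{Strict inequality.} Fix $n > N_{k-1}$ and put $x = L_{k+1}(n)$.
\begin{enumerate}[label=(\roman*)]
\item By Proposition~\ref{p:FL}, $x$ is the greatest antecedent of $n$ by $F_{k+1}$. Hence $F_{k+1}(x) = n$ and $F_{k+1}(x+1) = n+1$.
\item The goal $L_{k+1}(n) < L_k(n)$ is the same as $x+1 \le L_k(n)$. By Corollary~\ref{galoisL} (with $j=1$) this is equivalent to $F_k(x+1) \le n$, that is, to $F_k(x+1) < F_{k+1}(x+1)$.
\item By Theorem~\ref{t:strict}, this last inequality holds as soon as $x+1 > N_k$, i.e. $L_{k+1}(n) \ge N_k$.
\item To get that bound, I use $n \ge N_{k-1}+1$ and the monotonicity of $L_{k+1}$ (Proposition~\ref{p:IL}): it suffices to show $L_{k+1}(N_{k-1}+1) \ge N_k$.
\item Proposition~\ref{p:Fdifflow}(c) gives $F_{k+1}(N_k) = N_{k-1}+1$. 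So $N_k$ is an antecedent of $N_{k-1}+1$ by $F_{k+1}$, and by Proposition~\ref{p:FL}(c) it is at most the greatest one, $L_{k+1}(N_{k-1}+1)$.
\end{enumerate}
This argument works for every $k\ge 1$.

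\emph{Equality at $N_{k-1}$ for $k>1$.} Again by Proposition~\ref{p:FL}(c), it suffices to show that $N_k-1$ is the last antecedent of $N_{k-1}$ for both $F_k$ and $F_{k+1}$.
\begin{itemize}
\item For $F_k$: Proposition~\ref{p:Fagain} gives $F_k(N_k-1) = N_{k-1}$ and $F_k(N_k) = N_{k-1}+1$. Hence $L_k(N_{k-1}) = N_k-1$.
\item For $F_{k+1}$ at $N_k$: we already know $F_{k+1}(N_k) = N_{k-1}+1$.
\item For $F_{k+1}$ at $N_k-1$: I apply Proposition~\ref{p:Fdifflow}(b) at $N_k-1$, which lies in $[k+3,N_k]$. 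Here $(N_k-1)-k-2 = T_{k+2}-3$. For $k\ge 2$ this value lies strictly between $T_{k+1}$ and $T_{k+2}$, so it is not triangular.
\item Consequently $F_{k+1}(N_k-1) = F_k(N_k-1) = N_{k-1}$, and therefore $L_{k+1}(N_{k-1}) = N_k-1$ as well.
\end{itemize}

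The main obstacle is recognizing the reformulation in step (ii) via the Galois connection; after that, everything reduces to earlier results. The only delicate bookkeeping is the non-triangularity check, which is exactly where $k>1$ is needed. For $k=1$ one has $T_3-3=3=T_2$, which is triangular, so the equality fails there, consistent with the hypothesis.
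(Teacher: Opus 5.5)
Your proof is correct and follows the paper's route. Like the paper, you use the Galois connection of Corollary~\ref{galoisL} to reduce $L_{k+1}(n)<L_k(n)$ to one instance of Theorem~\ref{t:strict} beyond $N_k$, and you get the equality at $N_{k-1}$ from Propositions~\ref{p:Fagain} and~\ref{p:Fdifflow} with the same non-triangularity check on $T_{k+2}-3$. The only difference is cosmetic: you apply Theorem~\ref{t:strict} at $L_{k+1}(n)+1$, with the threshold coming from $F_{k+1}(N_k)=N_{k-1}+1$, whereas the paper applies it at $L_k(n)$, with the threshold coming from $L_k(N_{k-1}+1)=N_k+1$.
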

\begin{proof}
We start by proving the equality of $L_{k}$ and $L_{k+1}$ at
$N_{k-1}$. Let $k>1$. By proposition~\ref{p:FL}, we know that
$L_k(N_{k-1})$ is the greatest value $m$ satisfying
$F_k(m) = N_{k-1}$. Thanks to the previous proposition~\ref{p:Fagain}
and the monotonicity of $F_k$, we conclude $L_k(N_{k-1}) = N_k-1$.
Moreover, Proposition~\ref{p:Fdifflow} states that
$F_{k+1}(n) = F_k(n)$ for $n \le N_k$ such that $n-k-2$ is not a
triangular number. In particular, we have $F_{k+1}(n) = F_k(n)$
for $n = N_k$ and $n = N_k-1$ (the latter case requires $k>1$ instead
of the usual $k>0$). As a consequence, the greatest value $m$ such that
$F_{k+1}(m) = N_{k-1}$ is also $N_k-1$. Hence $L_{k+1}(N_{k-1}) = N_k-1$.

Now we prove $L_{k+1}(n) < L_k(n)$ when $k\ge 1$ and $n > N_{k-1}$.
By Corollary~\ref{galoisL},
this is equivalent to $n < F_{k+1}(L_k(n))$.
We reformulate the first occurrence of $n$ as
$F_k(L_k(n))$ by Proposition~\ref{p:FL}.
The obtained inequation $F_k(L_k(n)) < F_{k+1}(L_k(n))$ is a instance of
Theorem~\ref{t:strict}, provided that $N_k < L_k(n)$.
Since $n > N_{k-1}$ and $L_k$ is monotonically increasing,
it is sufficient to prove that $N_k < L_k(N_{k-1}+1)$.
But $L_k(N_{k-1}+1) = N_{k}+1$. Indeed,
the previous proposition~\ref{p:Fagain} indicates that $N_{k}+1$ is
the greatest $m$ such that $F_k(m) = N_{k-1}+1$.
\end{proof}

Since $I_k(n) = L_k(n-1)+1$ for $k,n\ge 1$, we can reformulate the
previous theorem for $I_{k+1}$ and $I_k$.

\begin{corollary}
For $k\ge 1$ and $n > N_{k-1}+1$, we have $I_{k+1}(n) < I_k(n)$.
\end{corollary}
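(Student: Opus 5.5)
The plan is to derive this corollary directly from the preceding theorem, using the identity $I_k(n) = L_k(n-1)+1$, which holds for all $k\ge 1$ and $n\ge 1$ (noted right after Definition~\ref{defIL}).

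Fix $k\ge 1$ and $n > N_{k-1}+1$. Then $n\ge 1$, so we may write $I_{k+1}(n) = L_{k+1}(n-1)+1$ and $I_k(n) = L_k(n-1)+1$. The hypothesis $n > N_{k-1}+1$ is equivalent to $n-1 > N_{k-1}$. This is exactly the condition under which the previous theorem gives $L_{k+1}(n-1) < L_k(n-1)$. Adding $1$ to both sides yields $I_{k+1}(n) < I_k(n)$.

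I expect no real obstacle. The only points to verify are that $n\ge 1$, so that the identity between $I$ and $L$ applies (for instance it fails at $n=0$, where $I_k(0)=0$), and that the threshold shifts by exactly one. For $k=1$ one needs $N_0 = \tfrac12\cdot 1\cdot 6 = 3$ to be the value used in the theorem, and since the theorem is stated for all $k\ge 1$ with $N_{k-1}$ given by the same formula, this is automatic. One could also note the boundary case: the equality $L_{k+1}(N_{k-1}) = L_k(N_{k-1})$ for $k>1$ translates into $I_{k+1}(N_{k-1}+1) = I_k(N_{k-1}+1)$. This shows the threshold is sharp, but it is not needed for the strict inequality.
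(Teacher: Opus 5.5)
Your proposal is correct and follows the paper's own argument: the paper obtains this corollary by applying the preceding theorem at $n-1$ and using the identity $I_k(n) = L_k(n-1)+1$ for $n\ge 1$, exactly as you do. Your remark that the equality at $N_{k-1}$ translates into $I_{k+1}(N_{k-1}+1) = I_k(N_{k-1}+1)$ for $k>1$ is also right, though not needed.
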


\subsection{Weak ordering of \texorpdfstring{$F_k^{k+1}$ and
    $F_{k+1}^{k+2}$}{some iterates of F}}
\label{formerconj2}

Here we prove Theorem~\ref{t:Fkkp1}, which compares $F_{k+1}^{k+2}$
and $F_k^{k+1}(n)$. In particular, it confirms that the conjectured
arrows in the Figure~7.1 of our earlier article~\cite{LetouzeyShuoSteiner2026}
can be turned into proved relations.

\begin{proposition}
\label{p2F}
For $k\ge 1$ and $n,p\ge 0$, we have $F_k^p(n) \le 2 F_{k}^{p+1}(n)$.
\end{proposition}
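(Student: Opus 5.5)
The whole statement reduces to the case $p=0$, applied at the point $F_k^p(n)$. Indeed, writing $m = F_k^p(n)$, the desired inequality $F_k^p(n) \le 2F_k^{p+1}(n)$ reads $m \le 2F_k(m)$. So I will prove that $m \le 2F_k(m)$ for all $k\ge 1$ and $m\ge 0$, and then instantiate it with $m=F_k^p(n)$.

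To get $m \le 2F_k(m)$, I compare $F_k$ with $F_1$. Iterating Theorem~\ref{t:weak} from $1$ up to $k$ (a trivial induction on $k$) gives $F_1(m) \le F_k(m)$ for all $m$.

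Next I use the closed form $F_1(m)=\lceil m/2\rceil$ mentioned in the introduction. I will justify it quickly by strong induction on $m$. From the definition with one nested call, $F_1(m) = m - F_1(m-1)$ for $m\ge 1$. Assuming $F_1(m-1)=\lceil (m-1)/2\rceil$, we get $F_1(m) = m-\lceil (m-1)/2\rceil = \lceil m/2\rceil$, which one checks separately for $m$ even and $m$ odd. The base case is $F_1(0)=0$.

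Combining the two facts gives $2F_k(m) \ge 2F_1(m) = 2\lceil m/2\rceil \ge m$. Applying this with $m=F_k^p(n)$ yields $F_k^p(n)\le 2F_k(F_k^p(n)) = 2F_k^{p+1}(n)$, as desired. There is no real obstacle here. The only points needing care are checking the parity cases for $F_1$ and noting that the chain of applications of Theorem~\ref{t:weak} starts at $k=1$.
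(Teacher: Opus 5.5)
Your proof is correct and follows the paper's argument: you reduce to $m\le 2F_k(m)$ with $m=F_k^p(n)$, then use $F_1\le F_k$ obtained by repeated application of Theorem~\ref{t:weak}. The only cosmetic difference is that you get $m\le 2F_1(m)$ from the closed form $F_1(m)=\lceil m/2\rceil$, checked by induction, whereas the paper reads it off from $L_1(m)=2m$ and $m\le L_1(F_1(m))$ in Proposition~\ref{p:FL}.
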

\begin{proof}
Let $k\ge 1$ and $n,p\ge 0$. First,
$F_k^p(n) \le 2 F_{k}^{p+1}(n)$ is equivalent to $m \le 2 F_k(m)$
where $m = F_k^p(n)$. We prove $m \le 2 F_k(m)$ for all $m$.
For that, it is handy to consider the $L_1$ function which is the
greatest preimage of $F_1$. We recall that $L_1(m) = m + F_1^0(m) = 2m$ and
$m \le L_1(F_1(m))$ by Proposition~\ref{p:FL}. Said otherwise,
$m \le 2 F_1(m)$. Finally $F_1(m) \le F_{k}(m)$ by repeated use of
Theorem~\ref{t:weak}.
\end{proof}

\begin{proposition}
\label{pIkSk}
For $k\ge 1$ and $n\ge 0$, we have $I_k^{k+1}(n) \le I_{k+1}^{k+2}(n)$.
\end{proposition}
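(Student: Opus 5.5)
The plan is to compare the two sides through the explicit formula of Proposition~\ref{pkey}, in the same spirit as the proof of Lemma~\ref{lem:F2I}. The case $n=0$ is immediate, since both sides are $0$. So let $n\ge 1$ and abbreviate $x=n-1$. Proposition~\ref{pkey} applied at $k$ and at $k+1$ gives
\begin{align*}
I_k^{k+1}(n) &= 2n-1 + 2F_k^{k-1}(x) + \sum_{i=0}^{k-2} F_k^i(x),\\
I_{k+1}^{k+2}(n) &= 2n-1 + 2F_{k+1}^{k}(x) + \sum_{i=0}^{k-1} F_{k+1}^i(x).
\end{align*}
In the second formula I would split off the last term $F_{k+1}^{k-1}(x)$ of the sum. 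Subtracting then yields
\[
I_{k+1}^{k+2}(n)-I_k^{k+1}(n) = \Bigl(2F_{k+1}^{k}(x) + F_{k+1}^{k-1}(x) - 2F_k^{k-1}(x)\Bigr) + \sum_{i=0}^{k-2}\bigl(F_{k+1}^i(x)-F_k^i(x)\bigr).
\]

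Each difference in the remaining sum is non-negative by Corollary~\ref{c:fsgrows}. It therefore suffices to show that the bracketed quantity is non-negative.

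This bracket is the one place where a plain term-by-term comparison fails. The coefficient $2$ now sits on $F_{k+1}^k$, whereas on the left it sits on $F_k^{k-1}$, which carries one fewer iteration. This is where Proposition~\ref{p2F} comes in. Applied to $F_{k+1}$ with $p=k-1$, it gives $2F_{k+1}^{k}(x) \ge F_{k+1}^{k-1}(x)$. Combining this with Corollary~\ref{c:fsgrows}, I get
\[
2F_{k+1}^{k}(x) + F_{k+1}^{k-1}(x) \ge 2F_{k+1}^{k-1}(x) \ge 2F_k^{k-1}(x),
\]
so the bracket is indeed non-negative.

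I expect no real obstacle beyond spotting this use of Proposition~\ref{p2F} to absorb the doubled term. The only care needed is with degenerate indices. When $k=1$ the sum $\sum_{i=0}^{k-2}$ is empty and $F^{0}$ is the identity, and the argument above still applies verbatim.
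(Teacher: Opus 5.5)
Your proof is correct and follows the paper's argument: you subtract the two instances of Proposition~\ref{pkey}, handle the remaining sum term by term with Corollary~\ref{c:fsgrows}, and absorb the doubled term using Proposition~\ref{p2F} applied to $F_{k+1}$ with $p=k-1$. The only difference is bookkeeping: the paper keeps $F_{k+1}^{k-1}-F_k^{k-1}$ inside the sum and bounds $2F_{k+1}^{k}(n-1)-F_k^{k-1}(n-1)\ge 0$, whereas you move $F_{k+1}^{k-1}$ into the bracket.
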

\begin{proof}
The case $n=0$ is obvious. For $n>0$ we use Proposition~\ref{pkey} twice.
\begin{align*}
I_{k+1}^{k+2}(n) - I_k^{k+1}(n)
 &= 2n-1 + 2 F_{k+1}^{k}(n-1) + \sum_{i=0}^{k-1} F_{k+1}^i(n-1) \\
 &\quad - \big(2n-1 + 2 F_{k}^{k-1}(n-1) + \sum_{i=0}^{k-2}
   F_k^i(n-1)\big) \\
  &= (2 F_{k+1}^{k}(n-1) - F_k^{k-1}(n-1))
     + \sum_{i=0}^{k-1} (F_{k+1}^i(n-1) - F_k^i(n-1)).
\end{align*}
All the differences in the sum are non-negative
thanks to Corollary~\ref{c:fsgrows}.
Finally $(2 F_{k+1}^{k}(n-1) - F_k^{k-1}(n-1))$ is also non-negative
since $F_k^{k-1}(n-1) \le F_{k+1}^{k-1}(n-1)$ by Corollary~\ref{c:fsgrows}
and $F_{k+1}^{k-1}(n-1) \le 2 F_{k+1}^k(n-1)$ by Proposition~\ref{p2F}.
\end{proof}

\begin{proof}[Proof of Theorem~\ref{t:Fkkp1}]
Let $k\ge 1$ and $n\ge 0$. We prove $F_{k+1}^{k+2}(n) \le F_k^{k+1}(n)$.
By Corollary~\ref{galoisI}, this is equivalent to
$I_{k}^{k+1}(F_{k+1}^{k+2}(n)) \le n$.
We prove it by Propositions~\ref{pIkSk} and~\ref{p:FL}:
\[
I_{k}^{k+1}(F_{k+1}^{k+2}(n)) \le I_{k+1}^{k+2}(F_{k+1}^{k+2}(n)) \le
n.
\]
\end{proof}

\section{Conclusion}
\label{concl}

In the current article, we proved several properties
of functions $F_k$ that we conjectured previously. Moreover we vastly
improved the proof of Theorem~\ref{t:weak}.
If we note $d_k(n)$ the difference $F_{k+1}(n)-F_k(n)$, we can
rephrase Theorem~\ref{t:weak} as $d_k(n)\ge 0$.
Besides, the last occurrence of a difference $d_k(n)$ of $0$
is at $n = N_k = \tfrac{1}{2}(k+1)(k+6)$ (see Proposition~\ref{p:Fdifflow}
and Theorem~\ref{t:strict}), while the first occurrence of $1$
is at $n = k+3$ (Proposition~\ref{p:Fdifflow})
and the first occurrence of $2$ is at $n = N_{k+1}$
(Proposition~\ref{p:diff2}).

From there, one may wonder whether these results could be generalized to
an arbitrary difference~$d$ instead of just $0$ or $1$ or $2$.
First note that for a fixed $k$, the difference $d_k(n)$ becomes
arbitrarily large when $n$ grows, as shown in the the asymptotic study
of our previous article~\cite[Sect.~6]{LetouzeyShuoSteiner2026}.
More specifically, when $n \to +\infty$ we have
$d_k(n)/n \to (\alpha_{k+1}-\alpha_k) > 0$ where
$\alpha_k$ is the positive zero of $X^k+X-1$.
Furthermore, we studied in a separate work~\cite{Letouzey2025}
the \emph{discrepancy}
$F_k(n)-\alpha_k\,n$. This discrepancy is finite for
$k\le 4$, logarithmic for $k=5$ and $O(n^a)$ for $k\ge 6$ and some
exponent $0<a<1$ depending on $k$. See also Shallit~\cite{Shallit2025}
for the case $k=3$. Thanks to that, we can provide some rough bounds
for the zone where a difference $d_k(n)$ of $d$ may occur.
For instance, we established that
$|F_3(n)-\alpha_3\,n|<1$ and $|F_4(n)-\alpha_4\,n|<2$.
As a consequence, we have
$|d_3(n) - (\alpha_4-\alpha_3)\,n|< 3$.
Hence a point $n$ such that $d_3(n)=d$ must satisfy
$(d-3)/(\alpha_4-\alpha_3) < n < (d+3)/(\alpha_4-\alpha_3)$.
However, these bounds are far from giving exactly the first and last
occurrences of a given difference $d$. For greater values of $k$, a
similar analysis is possible, but this time the unbounded discrepancies
lead to bounds that are harder to express and probably less precise.

As a side remark, the difference $d_k(n)$ evolves by steps of $-1$ or $0$
or $1$ when $n$ grows. Indeed,
$\partial F_k(n) \in \{0,1\}$ and similarly for $F_{k+1}$.
So $d_k(n)$ follows a discrete version of the intermediate value
theorem~\cite{Johnsonbaugh01011998}.
Combined with $d_k(0)=0$ and $d_k(n)\ge 0$ and the limit of $d_k$
mentioned previously, this ensures that differences $d_k(n)$ are
strictly smaller that $d$ before the first occurrence of $d$ and
strictly greater that $d$ after the last occurrence of $d$.

As an experiment, we computed
some of these $d_k(n)$ (in practice for $k\le 200$ and
$n\le 100{,}000$).
According to this experiment,
for a given $d\ge 3$ and $k$ sufficiently large, the smallest $n$ for
which $d_k(n) = d$ appears to be
\begin{equation}
   n = \frac{k(k+5)}{2} + d\,k + 2 + \frac{(d+1)(d+2)(d+3)}{6}
\end{equation}
For instance, this formula seems to hold for $d=3$ and $k\ge 4$,
or for $d=4$ and $k\ge 8$, or for $d=5$ and $k\ge 13$.
For $d=2$, the value $N_{k+1}$ obtained in Proposition~\ref{p:diff2}
is close to this formula, but the final term $(d+1)(d+2)(d+3)/6 = 10$ should
be replaced in this case by $5$.

During this experiment,
we also tried to determine the largest $n$ for which $d_k(n) = d$.
This computation only provides partial answers, since $d$ could possibly
occur again after the tested zone. In practice, the differences $d_k(n)$
seem pretty regular and we restrict ourselves to values of $d$
that are much smaller that the highest observed difference.
For a given $d\ge 1$ and $k$ sufficiently large, the largest $n$
for which $d_k(n) = d$ appears to be given by a polynomial
$P_d(k)$ of degree $3$ whose leading coefficient is $\tfrac{1}{6}$.
For instance:
\begin{align}
 P_1(k) &= \frac{k^3}{6} + 3 k^2 + (8-\frac{1}{6})k + 4 \quad
 \text{for~} k\ge 3; \\
 P_2(k) &= \frac{k^3}{6} + \frac{7}{2} k^2 + (\frac{23}{2}-\frac{1}{6})k
 + 8 \quad \text{for~} k\ge 3; \\
 P_3(k) &= \frac{k^3}{6} + \frac{7}{2} k^2 + (\frac{25}{2}-\frac{1}{6})k
 + 13 \quad \text{for~} k\ge 3; \\
 P_4(k) &= \frac{k^3}{6} + 4 k^2 + (17-\frac{1}{6})k + 21
  \quad \text{for~} k\ge 2;\\
 &\ldots\nonumber\\
 P_{10}(k) &= \frac{k^3}{6} + 5 k^2 + (28-\frac{1}{6})k + 73
  \quad \text{for~} k\ge 15.
\end{align}
Note that for $d=0$, we have $P_0(k) = N_k$ of degree 2
(see Theorem~\ref{t:weak} and Proposition~\ref{p:Fdifflow}). But this case
$d=0$ seems to be an exception compared with the aforementioned polynomials
of degree 3 for $d\ge 1$.

It is currently unclear whether the polynomials given above could be
expressed in an unified way.
Moreover, proving these experimental formulas seems out of reach for now.
Finally, it is rather unsatisfactory that these formulas cannot
handle the low values of~$k$. Indeed, the most popular $F_k$ are
arguably $F_2$ and $F_3$, i.e., $G$ and $H$ in Hofstadter's
terminology. This can be hinted by the lengths of their OEIS pages
\seqnum{A005206} and \seqnum{A005374} compared to, say,
\seqnum{A005375} for $F_4$.

\section{Acknowledgments}

The author thank Jeffrey Shallit and Wolfgang Steiner and Shuo Li
for their great help during this work.

\bibliographystyle{jis}
\bibliography{letouzey}
\end{document}